\documentclass[letterpaper,12pt,reqno]{article}

\usepackage[margin=1.1in]{geometry}
\usepackage[T1]{fontenc}
\usepackage[english]{babel}

\usepackage{amsmath,amssymb,amsfonts,amsthm}
\usepackage{mathtools}
\usepackage{stmaryrd} 

\usepackage{microtype}
\allowdisplaybreaks

\usepackage{graphicx}
\usepackage{tikz}
\usetikzlibrary{decorations.pathreplacing,arrows.meta}

\usepackage{algorithm}
\usepackage{algpseudocode}
\usepackage{float} 

\usepackage{xcolor}
\definecolor{Myblue}{rgb}{0,0,0.6}
\definecolor{cellviolet}{RGB}{135,55,195}
\definecolor{dotblue}{RGB}{40,50,185}
\usepackage[colorlinks,citecolor=Myblue,linkcolor=Myblue,urlcolor=Myblue,pdfpagemode=None]{hyperref}

\theoremstyle{plain}
\newtheorem{theorem}{Theorem}[section]

\newtheorem{lemma}[theorem]{Lemma}
\newtheorem{corollary}[theorem]{Corollary}

\theoremstyle{definition}
\newtheorem{example}[theorem]{Example}
\newtheorem{definition}[theorem]{Definition}
\newtheorem{remark}[theorem]{Remark}

\newcommand{\bb}[1]{\mathbb{#1}}

\title{Solomonoff Induction and Singular Integrals}
\author{
Will Troiani\thanks{Resolution. \texttt{will@resolution.org}}
\and
Daniel Murfet\thanks{Resolution. \texttt{murfet@resolution.org}}
}
\date{\today}

\begin{document}

\maketitle

\begin{abstract}
The Solomonoff distribution $M$ assigns an a priori probability to a finite
binary string $z$ by summing over all programs whose output begins with $z$, a
program of length $\ell$ weighted by $2^{-\ell}$: the likely strings are those
with many short explanations. This is apparently a purely discrete notion of
complexity. However, Riemann sums are also countable. We show that the sum
defining $M$ contains, suitably reorganised, Riemann sums approximating the
Bayesian evidence of any computable statistical model --- singular integrals
whose asymptotics are governed by Singular Learning Theory and, through it, by
invariants of algebraic geometry. Concretely, for every computable Bayesian
model we construct a single monotone Turing machine whose induced semimeasure
agrees with the evidence $Z_n$ up to a uniform multiplicative constant. For a
computable Bayesian model satisfying in addition the hypotheses of Watanabe's
free energy asymptotics, it follows that data $X^n = X_1\cdots X_n$ drawn
i.i.d.\ from the true distribution satisfies
\[
-\log M(X^n)\ \le\ n L_n(w_0) + \lambda\log n - (m-1)\log\log n + O_{\bb P}(1),
\]
where $L_n$ is the empirical loss, $w_0$ an optimal parameter, $\lambda$ the
learning coefficient, and $m$ its multiplicity. The learning coefficient, a
geometric measure of model simplicity, thereby appears within the Solomonoff
distribution as the coefficient of $\log n$ in an upper bound on code length.
\end{abstract}

\section{Introduction}
\label{sec:intro}

An approach to universal Artificial General Intelligence (AGI) has been defined
in \cite{Hutter2005,Hutter:24uaibook2} using a formalisation of an agent
performing optimally across all computable environments. A key ingredient towards this definition is the \emph{Solomonoff prior}, which is the mixture,
defined with respect to a fixed universal monotone Turing machine $\mathcal{U}$:
\[
\xi_{\mathcal{U}} := \sum_{\nu} 2^{-\mathcal{K}(\nu)} \cdot \nu
\]
where the sum is over all lower-semicomputable semimeasures $\nu$, and $\mathcal{K}$ is the
Kolmogorov complexity with respect to $\mathcal{U}$.\footnote{Our monotone Turing machines have
tape alphabet $\bb{B}=\{0,1\}$ and no blank symbols, so the contents of the input tape may be
read as random bits, to be used for random sampling (Definition~\ref{def:monotone-tm}).
In the context of reinforcement learning, this sum is instead over
lower-semicomputable \emph{chronological} semimeasures, but we will not consider these
here.} Once a method of evaluating the performance of agents has been
constructed, the universally intelligent agent AIXI (sometimes spelt AI$\xi$) is then defined
as a maximiser of this valuation; see
\cite[Chapter~7]{Hutter:24uaibook2} for details. A key property of AIXI is that it uses
\emph{simple} reasoning, as motivated by philosophy of science concepts such as Occam's
razor\footnote{Among the hypotheses consistent with the observations, prefer the simplest
\cite[Section~16.1]{Hutter:24uaibook2}.}
and Epicurus' principle\footnote{Discard no hypothesis consistent with the observations; in
the prior, every such hypothesis retains positive weight
\cite[Section~16.1]{Hutter:24uaibook2}.}, and formalised by the minimisation
of $\mathcal{K}$.

It is therefore common to swap out the Solomonoff prior for the Solomonoff \emph{distribution} $M$,
which agrees with $\xi_{\mathcal{U}}$ up to a multiplicative constant in the case
of sequence prediction, and is defined as follows for a finite binary string $z$:
\begin{equation}\label{eq:M_sum}
M(z) = \sum_{p\,:\,\mathcal{U}(p)=z\ast} 2^{-\ell(p)},
\end{equation}
summing over \emph{minimal programs} $p$; here $\mathcal{U}(p)=z\ast$ means that in the run of
$\mathcal U$ on any input tape extending $p$, the input head has consumed exactly $p$ at the
step at which the last bit of $z$ is written (Definition~\ref{def:monotone-tm}). That
\eqref{eq:M_sum} is a reasonable notion of a priori probability is exemplified by the
correspondence between machines and semimeasures: every monotone Turing machine $T$ induces a
lower-semicomputable semimeasure $P_T$ (Definition~\ref{def:program-semimeasure}), where
$P_T(z)$ is the probability that the output of $T$ begins with $z$ when its input tape
is filled with uniformly random bits, and conversely every lower-semicomputable semimeasure arises this
way up to a bounded multiplicative factor \cite{Hutter2005,Hutter:24uaibook2}. Under this
correspondence, $M$ dominates each $P_T$ with constant $2^{-\ell(\langle T\rangle)}$
(Theorem~\ref{thm:dominance}).

The minimal programs $p$ in \eqref{eq:M_sum} are binary strings which may be of
the form $\langle T \rangle p'$ where
$\langle T \rangle$ is some encoding of a Turing machine $T$ and $p'$ is the rest of the
string, which may be passed as input to $T$ or as extra information. Amongst such strings are
two conceptually distinct categories: \emph{deterministic programs}, specified by the
portion of the string $\langle T \rangle$ and some input to $T$ contained in $p'$; and \emph{samplers},
which read their code from $\langle T \rangle$ and then continue reading $p'$ as
random bits used to sample from a predictive distribution.

A preference for simplicity has also been considered in a different
context, that of \emph{Bayesian learning}. Fix a model
class $\{p(\cdot \mid w) \mid w \in W\}$,
a family of probability distributions indexed by a parameter $w$ ranging over a space $W$,
together with a prior distribution $\phi$ on $W$ with density $\varphi$. Given data $D_n = (z_1, \ldots, z_n)$
consisting of $n$ i.i.d.\ samples, Bayes' rule updates the prior to the \emph{posterior}
distribution
\[
p(w \mid D_n) \propto \varphi(w)\prod_{i=1}^{n} p(z_i \mid w),
\]
and by Bayesian \emph{learning} we mean the evolution of this posterior as $n$ increases.
Amongst the samplers appearing in \eqref{eq:M_sum} are those implementing Bayesian
learning for a fixed computable model class: draw $w \sim \phi$, then sample from
$p(\cdot \mid w)$.

When the model class is \emph{singular}, Singular Learning Theory (SLT) describes the
large-$n$ behaviour of the \emph{Bayesian evidence}
\[
Z_n := \int_W \varphi(w)\prod_{i=1}^{n} p(z_i \mid w)\,dw,
\]
the normalising constant of the posterior above. Watanabe's theorems
\cite{watanabeAlgebraicGeometryStatistical2009,Watanabe2018} give the asymptotic expansion of
the associated \emph{free energy} (see Appendix~\ref{sec:setup:slt}):
\[
F_n := -\log Z_n = n L_n(w_0) + \lambda \log n - (m-1)\log\log n + O_{\bb P}(1),
\]
where $L_n(w) := -\tfrac{1}{n}\sum_{i=1}^n \log p(z_i \mid w)$ is the \emph{empirical loss},
$w_0$ is an optimal parameter, $\lambda > 0$ is the \emph{learning coefficient}, and $m$
is its multiplicity. A localisation of $\lambda$ near
a chosen parameter, the Local Learning Coefficient (LLC), has been adopted as a measure of
model simplicity
\cite{lau2024locallearningcoefficientsingularityaware,hoogland2024developmental}: in the
regular case the learning coefficient equals half the number of parameters, and so reads as an
effective parameter count. This leaves us with two notions of simplicity: minimum description
length on the one hand, and singular model simplicity as measured by the learning coefficient
on the other.

The goal of the present paper is to see how these two notions of simplicity
relate to one another. At first sight they belong to different worlds: the sum
\eqref{eq:M_sum} is indexed by a countable set of discrete objects, while the
evidence $Z_n$ is an integral over a continuous parameter space, whose
asymptotics are controlled by the geometry of the zero locus of the
Kullback--Leibler divergence and ultimately by resolution of singularities.
The bridge is the observation that Riemann sums are also countable. We show
that the sum defining $M$ can be reorganised so that within it appear Riemann
sums for the evidence integrals of computable Bayesian models.

It is no great surprise that Bayesian learning can be approximated by
computable procedures, and so ought to contribute terms to the sum
\eqref{eq:M_sum}. There is a
challenge, though: to arrive at a meaningful relationship between $M(z)$ on the one hand, with
$z = z_1 \cdots z_n$ and $n$ free to grow, and the learning coefficient on the other, the
contribution must come from a \emph{single fixed} machine serving every $n$. This is possible,
and constitutes the main technical work of the paper.

The construction has two stages. First, we define a computable procedure
$\mathsf{BayesSampler}(\mathfrak m,\omega)$ (Definition~\ref{def:bayes-sampler}), where
$\mathfrak m$ is a finite effective description of the statistical model, comprising an
integer $b > 0$, a model class $\{p(\cdot\mid w) \mid w\in W\}$ of probability distributions on
the set $\bb{B}^b$ of bit-strings of length $b$, and a prior $\phi$ on $W$, and
$\omega\in\bb{B}^{\bb{N}}$ is an infinite string of uniformly random bits. Second, fixing
$\mathfrak m$ and
hard-wiring it into the procedure yields an ordinary monotone Turing machine $B_{\mathfrak m}$
with $B_{\mathfrak m}(\omega)=\mathsf{BayesSampler}(\mathfrak m,\omega)$, whose only input is
$\omega$. This one machine is fixed by $\mathfrak m$ alone, receives neither the sample size
nor the data, runs forever, and writes a single infinite string; its induced semimeasure
reproduces the Bayesian evidence $Z_n$ of the model on every prefix, up to a uniform constant
(Theorem~\ref{thm:sampler-correct}). The relationship is that of a Riemann sum to its
integral: for every string $z = z_1 \cdots z_n$ consisting of $n$ strings
$z_i \in \bb{B}^b$,
\begin{alignat*}{2}
P_{B_{\mathfrak m}}(z)
&=
\sum_{\alpha}
&&\prod_{t=1}^{n} p\bigl(z_t\mid w_t(\alpha)\bigr)\,\phi\bigl(D_{j(n),\alpha}\bigr),
\\
Z_n
&=
\int_W
&&\prod_{t=1}^{n} p\bigl(z_t\mid w\bigr)\,d\phi(w),
\end{alignat*}
where the cells $D_{j(n),\alpha}$ partition the parameter space $W$, up to a null set of
boundaries, at a resolution growing with $n$, and $w_t(\alpha)$ is the centre of the cube
of the current cell at time step $t$ (Lemma~\ref{lem:sampler-identity}). The estimate relating the two is 
\[
e^{-\pi^2/6}\,Z_n\ \le\ P_{B_{\mathfrak m}}(z)\ \le\ e^{\pi^2/6}\,Z_n,
\]
so the induced semimeasure of $B_{\mathfrak m}$ has exactly the free energy asymptotics of
the model (Corollary~\ref{cor:sampler-free-energy}). Since $B_{\mathfrak m}$ is
one monotone Turing machine
among those contributing to \eqref{eq:M_sum}, dominance yields the main 
result (Theorem~\ref{thm:universal-aware}):
\[
-\log M(z)\ \le\ -\log Z_n + O(1),
\]
and when the strings $X_1,\ldots,X_n$ are drawn i.i.d.\ from a true distribution satisfying
the hypotheses of Watanabe's theorem (Appendix~\ref{sec:setup:slt}), writing
$X^n := X_1\cdots X_n$,
\[
-\log M(X^n)\ \le\ n L_n(w_0) + \lambda\log n - (m-1)\log\log n + O_{\bb P}(1).
\]
The two notions of simplicity are thereby related within a single bound: the additive
constant, which depends on the model description but on neither $n$ nor the data, pays for
the description of the model, while the coefficient of $\log n$ is its learning coefficient.

Background on the Solomonoff distribution, following \cite{Hutter:24uaibook2}, is recalled in
Appendix~\ref{sec:setup:universal}, and the necessary singular learning theory in
Appendix~\ref{sec:setup:slt}. Related work is discussed in Remark~\ref{rem:related-work}.

\section{An explicit sampler: description length and the learning coefficient}
\label{sec:main-theorem}

Fix an observed string $z=z_1\cdots z_n\in(\bb{B}^b)^n$, with empirical loss $L_n$ and evidence
$Z_n$ as in Definition~\ref{def:block_empirical_loss}. The string is fixed only for the
probability analysis: the sampler constructed below is built from the model description alone
and inspects neither $z$ nor $n$.

\subsection{Computability assumptions}
\label{sec:main-assumptions}
The following definition collects the effective assumptions under which this section
constructs the sampler $\mathsf{BayesSampler}$ and its
specialisations $B_{\mathfrak m}$. Fix a binary encoding of the finite objects appearing
below: natural numbers, indices $(j,\alpha)$, rational numbers, strings, and tuples of
these. Programs and subroutines receive and return such encodings, and we do not
distinguish a finite object from its encoding.

\begin{definition}
\label{def:computable-bayesian-model}
A \emph{computable Bayesian model} is given by the data:
\begin{itemize}
\item a compact set of parameters $W\subseteq\bb{R}^d$, for some $d\in\bb{Z}_{>0}$,
      contained in a rational box $R$,
\item a statistical model $\{p(\cdot\mid w) \mid w\in R\}$ on $\bb{B}^b$, the set of
      bit-strings of length $b$, for some integer $b>0$,
\item a prior probability measure $\phi$ on $W$ with density $\varphi$,
\end{itemize}
required to satisfy the conditions:
\begin{enumerate}
\item there exists a computable $L_0>0$ such that
      for all $w,w'\in R$ and all $s\in\bb{B}^b$,
      \[
      \bigl|p(s\mid w)-p(s\mid w')\bigr| \le L_0\|w-w'\|;
      \]

\item there exists a computable $\eta\in(0,1]$ such that
      \[
      \eta \le p(s\mid w)\qquad \forall\,w\in R,\ \forall\,s\in\bb{B}^b;
      \]

\item the map $(r,\alpha)\mapsto \phi(D_{r,\alpha})$ is computable;

\item the map $(w,s)\mapsto p(s\mid w)$ is uniformly computable at rational points: there
      is a program which, on input a rational $w\in R$, a string $s\in\bb{B}^b$ and
      $l\in\bb N$, returns a rational $\widehat p$ with $|\widehat p-p(s\mid w)|\le 2^{-l}$.
\end{enumerate}
\end{definition}

\begin{remark}
For \textup{(1)} it suffices that each map $w\mapsto p(s\mid w)$ is $C^1$ on $R$ (uniformly in
$s\in\bb{B}^b$) with a computable uniform bound on $\|\nabla_w p(s\mid w)\|$.
\end{remark}

\begin{definition}
\label{def:model-description}
A \emph{model description} of a computable Bayesian model is a tuple
\[
\mathfrak m:=\bigl(b,\;T_b,\;\phi(D_{\cdot,\cdot}),\;\widehat G,\;R\bigr),
\]
where:
\begin{itemize}
\item $b$ is the string length;
\item $T_b$ is a program witnessing condition \textup{(4)}: on input $(w,s,l)$, with
      $w\in R$ rational, it returns a rational $\widehat p$ with
      $|\widehat p-p(s\mid w)|\le 2^{-l}$;
\item $\phi(D_{\cdot,\cdot})$ is a program computing the function
      $(r,\alpha)\mapsto\phi(D_{r,\alpha})$ of condition \textup{(3)};
\item $\widehat G$ is a rational number with $\widehat G\ge(L_0/\eta)\sqrt d$, where
      $L_0$ and $\eta$ witness conditions \textup{(1)} and \textup{(2)};
\item $R$ is the list of rational endpoints of the box of
      Definition~\ref{def:computable-bayesian-model}.
\end{itemize}
\end{definition}

\begin{definition}
\label{def:cell-tree}
For $j\in\bb N$ and $\alpha\in\{0,\dots,2^j-1\}^d$ set
\[
C_{j,\alpha}:=\prod_{i=1}^d[\alpha_i 2^{-j},(\alpha_i+1)2^{-j}],
\qquad
D_{j,\alpha}:=W\cap C_{j,\alpha}.
\]
A \emph{cell} is a set $D_{j,\alpha}$; it is \emph{positive} if
$\phi(D_{j,\alpha})>0$.
\end{definition}

\begin{remark}
\label{rem:normalisation}
Write $R=\prod_{i=1}^d[a_i,a_i+s_i]$ for the rational box of
Definition~\ref{def:computable-bayesian-model} and let
$A:R\to[0,1]^d$ be the affine map $A(w)_i=(w_i-a_i)/s_i$, which is computable from the
assumptions on $R$. Transporting the model along $A$ gives the model
\[
p'(s\mid w'):=p(s\mid A^{-1}w'),\qquad A^{-1}(w')_i=a_i+s_i w'_i,
\]
on the parameter set $A(W)\subseteq[0,1]^d$ with prior the pushforward $A_*\phi$. Condition
\textup{(1)} holds for $p'$ with the computable constant $L_0\cdot\max_i s_i$, since
\begin{align*}
\bigl|p'(s\mid w'_1)-p'(s\mid w'_2)\bigr|
&\le L_0\|A^{-1}w'_1-A^{-1}w'_2\|\\
&\le L_0\cdot\max_i s_i\cdot\|w'_1-w'_2\|,
\end{align*}
and $b$ and conditions \textup{(2)} and \textup{(4)} are unchanged. The rational
constant $\widehat G$ of Definition~\ref{def:model-description} is replaced by
$\widehat G\cdot\max_i s_i$. The cells of the
transported model are those of Definition~\ref{def:cell-tree} for $A(W)$, namely
$D'_{j,\alpha}=A(W)\cap C_{j,\alpha}$, and the map of condition \textup{(3)} is
\[
(j,\alpha)\longmapsto A_*\phi(D'_{j,\alpha})=\phi\bigl(W\cap A^{-1}(C_{j,\alpha})\bigr).
\]
\end{remark}

In light of Remark~\ref{rem:normalisation}, we assume from now on that $R=[0,1]^d$, so
that $C_{0,\mathbf 0}=R$, $W\subseteq[0,1]^d$, and $D_{0,\mathbf 0}=W$.

\begin{lemma}\label{lem:block_empirical_loss_lipschitz}
Assume conditions \textup{(1)} and \textup{(2)} of
Definition~\ref{def:computable-bayesian-model}. Then for
each $s\in\bb{B}^b$ the map $w\mapsto-\log p(s\mid w)$ is $G$-Lipschitz on $R$, where
\[
G := \frac{L_0}{\eta},
\]
and consequently, for every string $z\in(\bb{B}^b)^n$, the empirical loss $L_n$ of
Definition~\ref{def:block_empirical_loss} is $G$-Lipschitz on $R$.
\end{lemma}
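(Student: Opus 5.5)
The plan is to prove the pointwise Lipschitz bound for $-\log p(s\mid w)$ first, and then obtain the bound for $L_n$ by averaging.

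\textbf{Step 1: the single-string bound.} Fix $s\in\bb{B}^b$ and $w,w'\in R$. Set $a=p(s\mid w)$ and $a'=p(s\mid w')$. By condition (2), both $a$ and $a'$ lie in $[\eta,1]$, where the log is positive and differentiable. The mean value theorem applied to $\log$ on the interval between $a$ and $a'$ gives
\[
\bigl|\log a-\log a'\bigr| = \frac{|a-a'|}{\xi}
\]
for some $\xi$ between $a$ and $a'$. In particular $\xi\ge\eta$. If one prefers to avoid the mean value theorem, the elementary inequality $|\log a-\log a'|\le |a-a'|/\min(a,a')$ for positive reals does the same job.

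\textbf{Step 2: combine with condition (1).} Condition (1) gives $|a-a'|\le L_0\|w-w'\|$. Therefore
\[
\bigl|-\log p(s\mid w)+\log p(s\mid w')\bigr| \le \frac{L_0}{\eta}\,\|w-w'\| = G\|w-w'\|.
\]
This is the first claim.

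\textbf{Step 3: the empirical loss.} Here $L_n(w)$ is the empirical loss of Definition~\ref{def:block_empirical_loss}, namely $-\frac1n\sum_{i=1}^n\log p(z_i\mid w)$ with $z_i\in\bb{B}^b$. It is an average of $n$ functions, each $G$-Lipschitz by Step 2. The triangle inequality then gives
\[
|L_n(w)-L_n(w')|\le \frac1n\sum_{i=1}^n G\|w-w'\| = G\|w-w'\|.
\]

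\textbf{Expected difficulty.} There is no genuine obstacle. The only point needing care is to apply the lower bound $\eta$ at both endpoints, so that the intermediate point $\xi$ is also at least $\eta$; condition (2) holds on all of $R$, so this is automatic. One should also note that the norm is the same Euclidean norm as in condition (1). The factor $\sqrt d$ in $\widehat G$ is not needed here; it only enters later when passing to cube diameters.
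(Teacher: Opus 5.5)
Your proof is correct and is essentially the paper's argument: the bound $|\log a-\log a'|\le|a-a'|/\eta$ for $a,a'\in[\eta,1]$, combined with condition (1), followed by averaging over the $n$ strings to get the same constant for $L_n$. One harmless slip: $\log$ is not positive on $[\eta,1]$ (it is nonpositive there); what you actually use, and what holds, is that its argument is at least $\eta>0$, so $\log$ is differentiable there with derivative at most $1/\eta$.
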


\begin{proof}
On $[\eta,1]$ the map $u\mapsto -\log u$ is $(1/\eta)$-Lipschitz. Thus for any $s\in\bb{B}^b$
and any $w,w'\in R$,
\[
\bigl|\log p(s\mid w)-\log p(s\mid w')\bigr|
\le \frac{1}{\eta}\,\bigl|p(s\mid w)-p(s\mid w')\bigr|
\le \frac{L_0}{\eta}\|w-w'\|.
\]
Averaging over the $n$ strings gives the same constant for $L_n$.
\end{proof}

\begin{definition}
\label{def:schedule}
Define
\[
j(t):=\min\{\,j\in\bb N \mid \widehat G\,2^{-j}\le t^{-2}\,\}
\qquad\text{for }t\ge 1.
\]
\end{definition}

Definition~\ref{def:bayes-sampler} specifies a computable procedure
$\mathsf{BayesSampler}(\mathfrak m,\omega)$, which receives the model description
$\mathfrak m$ deterministically together with an infinite string of bits
$\omega\in\bb{B}^{\bb{N}}$.
It calls two subroutines, which we assume exist with the properties of the following
definition.

\begin{definition}
\label{def:subroutines}
We assume given two subroutines, each consuming input bits sequentially, the decision to
halt depending only on the bits consumed so far:
\begin{itemize}
\item $\mathrm{SampleChild}(D)$: given a positive cell $D$ of a dyadic partition of $W$,
      replaces $D$ by one of its children $D'$, chosen with conditional prior mass
      $\phi(D')/\phi(D)$; it terminates almost surely, and the returned child is almost
      surely positive.
\item $\mathrm{StringSample}(w)$: given a dyadic point $w\in R$, consumes bits to sample one
      string from $p(\cdot\mid w)$; it terminates almost surely.
\end{itemize}
\end{definition}

The random choices in $\mathrm{SampleChild}$ and
$\mathrm{StringSample}$ can be implemented by the interval algorithm of Han and Hoshi
\cite{HanHoshi1997} for sampling a finite distribution from uniformly random bits, using the
programs
$\phi(D_{\cdot,\cdot})$ and $T_b$ of the model description.

\begin{definition}
\label{def:bayes-sampler}
$\mathsf{BayesSampler}$ is the procedure which, on input a model description
$\mathfrak m=(b,T_b,\phi(D_{\cdot,\cdot}),\widehat G,R)$ together with an
infinite string of bits $\omega\in\bb{B}^{\mathbb{N}}$, writes an infinite string (concatenated
$b$-bit strings) on the output tape as follows:
\begin{algorithmic}[1]
\State $D\gets D_{0,\mathbf 0}=W$;\ $j\gets 0$.
\Loop\ \textbf{over} $t=1,2,3,\dots$
  \While{$j<j(t)$}
    \State $D\gets \mathrm{SampleChild}(D)$;\quad $j\gets j+1$.
  \EndWhile
  \State $w_t\gets$ the centre of $C_{j,\alpha}$, where $D=D_{j,\alpha}$.
  \State $\mathrm{StringSample}(w_t)$.
\EndLoop
\end{algorithmic}
\end{definition}

\begin{example}
\label{ex:sampler-run}
Take $b=1$, $d=2$, $W=R=[0,1]^2$, $\phi$ the uniform prior, and
\[
p(1\mid w):=\frac{4+w_1+w_2}{16},
\qquad
p(0\mid w):=1-p(1\mid w).
\]
Then $p(1\mid w)\in[\tfrac14,\tfrac38]$ and $p(0\mid w)\in[\tfrac58,\tfrac34]$, so
$\eta=\tfrac14$ witnesses condition \textup{(2)}, and
\[
|p(s\mid w)-p(s\mid w')|
=\tfrac1{16}\bigl|(w_1-w_1')+(w_2-w_2')\bigr|
\le\tfrac{\sqrt2}{16}\|w-w'\|,
\]
so $L_0=\tfrac{\sqrt2}{16}$ witnesses condition \textup{(1)}. The cell masses are
$\phi(D_{j,\alpha})=4^{-j}$, so conditions \textup{(3)} and \textup{(4)} hold. Then
$G=L_0/\eta=\tfrac{\sqrt2}{4}$, so $G\sqrt d=\tfrac12$; we take $\widehat G=\tfrac12$, and
\[
j(t)=\min\{\,j\in\bb N \mid 2^{j+1}\ge t^2\,\},
\qquad
j(1)=0,\quad j(2)=1,\quad j(3)=3,\quad j(4)=3.
\]
In a run of $\mathsf{BayesSampler}(\mathfrak m,\omega)$: at $t=1$ no new child cell is
sampled, since $j=0=j(1)$, and $z_1$ is sampled from $p(\cdot\mid w_1)$ with
$w_1=(\tfrac12,\tfrac12)$, the centre of $C_{0,\mathbf 0}$; at $t=2$ the loop samples one new child cell, a quadrant
$D_{1,\alpha_1}$ chosen with probability $\tfrac14$, and samples $z_2$ from $p(\cdot\mid w_2)$;
at $t=3$ it samples two new child cells in succession, a cell $D_{2,\alpha_2}$ and
then a cell $D_{3,\alpha_3}$, each chosen with probability $\tfrac14$, and samples
$z_3$ from $p(\cdot\mid w_3)$. Since $j(4)=3$, at time step $t=4$ it
samples $z_4$ from the same $p(\cdot\mid w_3)$. Figure~\ref{fig:sampler-run} shows this run.
\end{example}

\begin{figure}[tp]
\centering
\begin{tikzpicture}[>={Stealth[length=2mm]}, line cap=round]
\node[anchor=east] at (0.55,9.75) {\scriptsize input tape:};
\fill[cellviolet!15] (0.7,9.4) rectangle (2.5,10.1);
\fill[cellviolet!8] (2.5,9.4) rectangle (4.7,10.1);
\draw (0.7,9.4) rectangle (8.5,10.1);
\draw (2.5,9.4) -- (2.5,10.1);
\draw (4.7,9.4) -- (4.7,10.1);
\node at (1.6,9.75) {$\langle\widehat B\rangle$};
\node at (3.6,9.75) {$\overline{\mathfrak m}$};
\foreach \x in {5.25,5.8,6.35,6.9,7.45,8.0}{ \draw (\x,9.4) -- (\x,10.1); }
\node at (4.975,9.75) {\scriptsize 1};
\node at (5.525,9.75) {\scriptsize 0};
\node at (6.075,9.75) {\scriptsize 1};
\node at (6.625,9.75) {\scriptsize 1};
\node at (7.175,9.75) {\scriptsize 0};
\node at (7.725,9.75) {\scriptsize 0};
\node at (8.25,9.75) {\scriptsize $\cdots$};
\draw[decorate,decoration={brace,mirror,raise=3pt}] (0.7,9.4) -- (2.5,9.4)
  node[midway,below=6pt] {\scriptsize interpreter};
\draw[decorate,decoration={brace,mirror,raise=3pt}] (2.5,9.4) -- (4.7,9.4)
  node[midway,below=6pt] {\scriptsize model description};
\draw[decorate,decoration={brace,mirror,raise=3pt}] (4.7,9.4) -- (8.5,9.4)
  node[midway,below=6pt] {\scriptsize bits $\omega$};
\foreach \x in {1,3,5,7}{ \draw[dotblue!20, line width=0.4pt] (\x,0) -- (\x,8); \draw[dotblue!20, line width=0.4pt] (0,\x) -- (8,\x); }
\foreach \x in {2,6}{ \draw[dotblue!40, line width=0.5pt] (\x,0) -- (\x,8); \draw[dotblue!40, line width=0.5pt] (0,\x) -- (8,\x); }
\draw[dotblue!70, line width=0.6pt] (4,0) -- (4,8);
\draw[dotblue!70, line width=0.6pt] (0,4) -- (8,4);
\draw[dotblue!85, line width=0.9pt] (0,0) rectangle (8,8);
\node[dotblue, anchor=west] at (0.05,8.35) {$W=R=[0,1]^2$};
\draw[cellviolet, line width=1.2pt] (0,0) rectangle (4,4);
\draw[cellviolet, line width=1.0pt] (2,2) rectangle (4,4);
\draw[cellviolet, line width=0.8pt] (2,3) rectangle (3,4);
\draw[cellviolet, thin] (4,1.2) -- (5.0,1.2) node[right, inner sep=2pt] {$D_{1,\alpha_1}$ {\scriptsize($t=2$)}};
\draw[cellviolet, thin] (4,2.6) -- (5.0,2.6) node[right, inner sep=2pt] {$D_{2,\alpha_2}$};
\draw[cellviolet, thin] (2.2,4) -- (1.6,5.1) node[above, inner sep=1pt] {$D_{3,\alpha_3}$ {\scriptsize($t=3$)}};
\fill[cellviolet] (4,4) circle (1.8pt) node[above right, inner sep=1.5pt] {\scriptsize $w_1$};
\fill[cellviolet] (2,2) circle (1.8pt) node[above right, inner sep=1.5pt] {\scriptsize $w_2$};
\fill[cellviolet] (2.5,3.5) circle (1.5pt);
\node[cellviolet, inner sep=1pt] at (2.78,3.32) {\tiny $w_3$};
\node at (3.7,-0.6) {\scriptsize a new cell is sampled only when $j<j(t)$};
\draw[cellviolet, ->] (2.2,-0.85) -- (2.2,-1.4);
\draw[cellviolet, ->] (3.7,-0.85) -- (3.7,-1.4);
\node[cellviolet, anchor=east, inner sep=1pt] at (2.1,-1.05) {\tiny 1 new cell};
\node[cellviolet, anchor=west, inner sep=1pt] at (3.8,-1.05) {\tiny 2 new cells};
\node at (1.45,-1.5) {\tiny $t{=}1$};
\node at (2.95,-1.5) {\tiny $t{=}2$};
\node at (4.45,-1.5) {\tiny $t{=}3$};
\node at (5.95,-1.5) {\tiny $t{=}4$};
\node[anchor=east] at (0.55,-2.05) {\scriptsize output tape:};
\fill[cellviolet!15] (0.7,-2.4) rectangle (2.2,-1.7);
\fill[cellviolet!15] (3.7,-2.4) rectangle (6.7,-1.7);
\draw (0.7,-2.4) rectangle (8.2,-1.7);
\draw (5.2,-2.4) -- (5.2,-1.7);
\draw (6.7,-2.4) -- (6.7,-1.7);
\draw[cellviolet, line width=1.4pt] (2.2,-2.4) -- (2.2,-1.7);
\draw[cellviolet, line width=1.4pt] (3.7,-2.4) -- (3.7,-1.7);
\node at (1.45,-2.05) {\scriptsize $z_1$};
\node at (2.95,-2.05) {\scriptsize $z_2$};
\node at (4.45,-2.05) {\scriptsize $z_3$};
\node at (5.95,-2.05) {\scriptsize $z_4$};
\node at (7.45,-2.05) {\scriptsize $\cdots$};
\draw[decorate,decoration={brace,mirror,raise=4pt}] (0.7,-2.4) -- (2.2,-2.4)
  node[midway,below=8pt] {\scriptsize $\sim p(\cdot\mid w_1)$};
\draw[decorate,decoration={brace,mirror,raise=4pt}] (2.2,-2.4) -- (3.7,-2.4)
  node[midway,below=8pt] {\scriptsize $\sim p(\cdot\mid w_2)$};
\draw[decorate,decoration={brace,mirror,raise=4pt}] (3.7,-2.4) -- (6.7,-2.4)
  node[midway,below=8pt] {\scriptsize $\sim p(\cdot\mid w_3)$};
\end{tikzpicture}
\caption{The run of Example~\ref{ex:sampler-run}, with the input tape segments
$\langle\widehat B\rangle$ and $\overline{\mathfrak m}$ of Definition~\ref{def:interpreter}.
Each new cell is chosen from its parent with conditional prior mass
$\phi(D_{j,\alpha})/\phi(D_{j-1,\alpha'})$, here $\tfrac14$ at every step.}
\label{fig:sampler-run}
\end{figure}

\begin{definition}
\label{def:specialisation}
Let $\mathfrak m$ be a model description. Write $B_{\mathfrak m}$ for the monotone Turing
machine such that
\[
B_{\mathfrak m}(\omega)=\mathsf{BayesSampler}(\mathfrak m,\omega),
\quad\forall\,\omega\in\bb{B}^{\bb{N}}.
\]
\end{definition}

\begin{definition}
\label{def:interpreter}
Fix a prefix-free encoding $\mathfrak m\mapsto\overline{\mathfrak m}$ of model descriptions
as binary strings, and write $\widehat B$ for the monotone Turing machine such that, for
every model description $\mathfrak m$,
\[
\widehat B(\overline{\mathfrak m}\,\omega)=B_{\mathfrak m}(\omega),
\quad\forall\,\omega\in\bb{B}^{\bb{N}}.
\]
\end{definition}

The single machine $\widehat B$ serves every model description, reading
$\overline{\mathfrak m}$ from its input tape before the bits $\omega$. The input tape shown
in Figure~\ref{fig:sampler-run} is that of the universal machine $\mathcal U$ computing
$\mathcal U(\langle\widehat B\rangle\,\overline{\mathfrak m}\,\omega)
=\widehat B(\overline{\mathfrak m}\,\omega)=B_{\mathfrak m}(\omega)$, where
$\langle\widehat B\rangle$ is the encoding of Definition~\ref{def:universal-mtm}.

\begin{lemma}
\label{lem:path-law}
Started from $D_{0,\mathbf 0}=W$ and iterated, $\mathrm{SampleChild}$ produces a nested
sequence of positive cells $D^{(0)}\supseteq D^{(1)}\supseteq\cdots$ with
$\bb{P}(D^{(j)}=D_{j,\alpha})=\phi(D_{j,\alpha})$. The
intersection $\bigcap_j D^{(j)}$ is a single point $w_\infty$ with $w_\infty\sim\phi$, and for every
$j$ the cell $D^{(j)}$ is $\phi$-almost surely the unique cell $D_{j,\alpha}$ containing $w_\infty$.
\end{lemma}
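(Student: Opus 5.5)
The proof has three parts: compute the law of $D^{(j)}$ by induction on $j$, use shrinking diameters to get a single limit point, and identify the law of that point with $\phi$ via a $\pi$--$\lambda$ argument on the dyadic cells.

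\emph{Law of $D^{(j)}$.} For $j=0$ the claim is trivial, since $D^{(0)}=W=D_{0,\mathbf 0}$ and $\phi(W)=1$. Suppose $D^{(j)}$ is almost surely positive with $\bb P(D^{(j)}=D_{j,\alpha})=\phi(D_{j,\alpha})$. By Definition~\ref{def:subroutines}, $\mathrm{SampleChild}$ run on a positive cell $D_{j,\alpha}$ terminates almost surely, returns a positive child almost surely, and returns a given child $D_{j+1,\beta}\subseteq C_{j,\alpha}$ with probability $\phi(D_{j+1,\beta})/\phi(D_{j,\alpha})$. Conditioning on $D^{(j)}$ gives
\[
\bb P(D^{(j+1)}=D_{j+1,\beta})=\phi(D_{j,\alpha})\cdot\frac{\phi(D_{j+1,\beta})}{\phi(D_{j,\alpha})}=\phi(D_{j+1,\beta}),
\]
where $\alpha$ is the unique parent index of $\beta$. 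Zero-mass cells are never selected, so the induction closes. Here "child" should mean $D_{j+1,\beta}$ with $C_{j+1,\beta}\subseteq C_{j,\alpha}$, and I will use that these prior masses sum to $\phi(D_{j,\alpha})$, which holds up to the $\phi$-null boundaries.

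\emph{The limit point.} Each $D^{(j)}$ is a nonempty compact set: it is positive, hence nonempty, and compact because $W$ is compact and $C_{j,\alpha}$ is closed. The sets are nested and $\operatorname{diam} D^{(j)}\le \sqrt d\,2^{-j}\to0$. By Cantor's intersection theorem $\bigcap_j D^{(j)}$ is a single point $w_\infty$, and $w_\infty$ is a measurable function of the path.

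\emph{Law of $w_\infty$.} This is the main step. Consider the half-open dyadic cubes $Q_{j,\alpha}=\prod_i[\alpha_i2^{-j},(\alpha_i+1)2^{-j})$, with the obvious adjustment at the coordinate $1$. They form a $\pi$-system generating the Borel $\sigma$-algebra of $[0,1]^d$.

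First I need that $\phi$ charges no boundary hyperplanes $\{w_i=k2^{-j}\}$. The statement treats $\phi$ as having a density $\varphi$, so Lebesgue absolute continuity gives this. Without a density this genuinely fails: for a point mass on a dyadic boundary the "unique cell" claim is false. So I will state explicitly that this uses the density assumption.

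With null boundaries, $\phi(Q_{j,\alpha}\cap W)=\phi(D_{j,\alpha})$. Moreover $w_\infty\in Q_{j,\alpha}$ forces $D^{(j)}=D_{j,\alpha}$, except when $w_\infty$ lies on a boundary. To handle that exception I will compute the law of $w_\infty$ restricted to the boundary set $N$ (the union of all dyadic hyperplanes, which is countable). Note that $w_\infty\in D^{(k)}$ for all $k$. For each fixed hyperplane and each $k$, the event that $w_\infty$ lies on it is contained in the union of the level-$k$ cells meeting it. By the first step that union has probability equal to its $\phi$-mass, and this mass tends to $\phi(\text{hyperplane})=0$ by continuity from above of $\phi$. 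So $\bb P(w_\infty\in N)=0$.

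Hence $\bb P(w_\infty\in Q_{j,\alpha})=\phi(D_{j,\alpha})=\phi(Q_{j,\alpha})$ for all $j,\alpha$. Dynkin's $\pi$--$\lambda$ theorem then gives $w_\infty\sim\phi$. Finally, $w_\infty\notin N$ almost surely, so $w_\infty$ lies in a unique level-$j$ cube, and since $w_\infty\in D^{(j)}$, that cell is $D^{(j)}$; this is the last assertion of Lemma~\ref{lem:path-law}.
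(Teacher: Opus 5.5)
Your proof is correct and follows the paper's argument: the same telescoping induction on the law of $D^{(j)}$, the same shrinking-diameter step giving a single point $w_\infty$, and the same use of the density to make dyadic boundaries $\phi$-null for the uniqueness claim. The one place you go further is the identification $w_\infty\sim\phi$, which the paper dispatches in a phrase (``the projective limit of the laws of $D^{(j)}$''); your route of first showing $\bb{P}(w_\infty\in N)=0$ from the tree law and then applying the $\pi$--$\lambda$ theorem on half-open dyadic cubes is a sound way to make that step rigorous.
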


\begin{proof}
By induction on $j$, using that $\mathrm{SampleChild}$ selects a child $D'$ of $D$ with
probability $\phi(D')/\phi(D)$:
$\mathbb P(D^{(j+1)}=D_{j+1,\beta})=\mathbb P(D^{(j)}=D_{j,\alpha})\,\phi(D_{j+1,\beta})/\phi(D_{j,\alpha})
=\phi(D_{j+1,\beta})$ for the parent $\alpha$ of $\beta$ (telescoping). The cells are nested with
$\mathrm{diam}(D^{(j)})\le\sqrt d\,2^{-j}\to 0$, so $\bigcap_j D^{(j)}=\{w_\infty\}$; the law of
$w_\infty$ is the projective limit of the laws of $D^{(j)}$, namely $\phi$. The boundaries of the cubes $C_{j,\alpha}$ are
Lebesgue-null, hence $\phi$-null since $\phi$ has a density, so $\phi$-a.s.\ $w_\infty$
lies, for each $j$, in the interior of exactly one cube $C_{j,\alpha}$; $D^{(j)}$ is
therefore the unique cell $D_{j,\alpha}$ containing $w_\infty$.
\end{proof}

For $n\ge 1$, each positive cell $D_{j(n),\alpha}$ and each time step $t\le n$, write
\[
w_t(\alpha):=\text{the centre of }C_{j(t),\alpha'},
\qquad\text{for the ancestor }D_{j(t),\alpha'}\text{ of }D_{j(n),\alpha}.
\]

\begin{lemma}
\label{lem:sampler-identity}
Write $P_{B_{\mathfrak m}}$ for the semimeasure induced by $B_{\mathfrak m}$
(Definition~\ref{def:program-semimeasure}). For every $n$ and every
$z=z_1\cdots z_n\in(\bb{B}^b)^n$,
\begin{equation}
\label{eq:sampler-sum}
P_{B_{\mathfrak m}}(z)
=\sum_{\alpha}\prod_{t=1}^{n} p\bigl(z_t\mid w_t(\alpha)\bigr)\,\phi\bigl(D_{j(n),\alpha}\bigr),
\end{equation}
summing over the positive cells $D_{j(n),\alpha}$.
\end{lemma}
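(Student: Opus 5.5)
We plan to compute $P_{B_{\mathfrak m}}(z)$ by conditioning on the cell reached after the first $n$ iterations of the outer loop. By Definition~\ref{def:program-semimeasure}, $P_{B_{\mathfrak m}}(z)$ is the probability, over uniformly random $\omega$, that the output of $B_{\mathfrak m}(\omega)$ begins with $z$. The run of $\mathsf{BayesSampler}$ is a sequential process: blocks of $\omega$ are consumed alternately by $\mathrm{SampleChild}$ and $\mathrm{StringSample}$. Each subroutine's halting decision depends only on the bits it has consumed (Definition~\ref{def:subroutines}). So the bits read by successive subroutine calls are disjoint consecutive segments of $\omega$. Conditional on everything before a call, each call therefore uses fresh independent uniform bits, and the output of that call has the stated law given its input. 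Since every call terminates almost surely, with probability one the machine writes at least $nb$ bits. The event that the output begins with $z$ is then the event that the $t$-th call to $\mathrm{StringSample}$ returns $z_t$ for each $t\le n$.

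\textbf{Step 1.} Let $D^{(j)}$ be the nested sequence of cells produced by the successive $\mathrm{SampleChild}$ calls. Because $j(t)$ is nondecreasing in $t$, by time step $n$ the algorithm has drawn exactly $D^{(0)},\dots,D^{(j(n))}$. At step $t\le n$, the point $w_t$ is the centre of the cube of $D^{(j(t))}$, which is the ancestor at level $j(t)$ of $D^{(j(n))}$. So if $D^{(j(n))}=D_{j(n),\alpha}$, then $w_t=w_t(\alpha)$ for every $t\le n$.

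\textbf{Step 2.} Decompose the event according to the value of $D^{(j(n))}$. By Lemma~\ref{lem:path-law}, $D^{(j(n))}$ is almost surely a positive cell $D_{j(n),\alpha}$, taken with probability $\phi(D_{j(n),\alpha})$.

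\textbf{Step 3.} Condition on $D^{(j(n))}=D_{j(n),\alpha}$; this fixes the whole chain $D^{(0)},\dots,D^{(j(n))}$. The string outputs $z_1,\dots,z_n$ are produced by $\mathrm{StringSample}(w_t(\alpha))$ on disjoint fresh bit segments. By the chain rule they are conditionally independent with laws $p(\cdot\mid w_t(\alpha))$, so the conditional probability of the event is $\prod_{t} p(z_t\mid w_t(\alpha))$.

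The law of total probability then gives \eqref{eq:sampler-sum}.

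\textbf{Main obstacle.} The subtle step is justifying this independence rigorously. The $\mathrm{SampleChild}$ calls for later levels are interleaved with $\mathrm{StringSample}$ calls, so the bits used for $D^{(j(t)+1)}$ come after the bits used for $z_t$. We must check that conditioning on the final cell does not bias the earlier string samples. We handle this with the chain rule over the interleaved sequence of calls. Each call's outcome, given the entire past, has its prescribed law: for cells, the ratio $\phi(D')/\phi(D)$, which does not depend on past strings; for strings, $p(\cdot\mid w_t)$. The joint probability of a given cell path together with $z_1,\dots,z_n$ therefore factors as the telescoping product of cell ratios, equal to $\phi(D_{j(n),\alpha})$, times $\prod_t p(z_t\mid w_t(\alpha))$. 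Summing over paths, which are determined by $\alpha$, finishes the proof.

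A final point concerns the null events, namely non-termination of a call or a non-positive child. These have probability zero, so they contribute nothing. Lower semicontinuity issues are absent because $P_{B_{\mathfrak m}}$ is defined as a probability over $\omega$.
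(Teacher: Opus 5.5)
Your proposal is correct and follows essentially the paper's route: decompose $P_{B_{\mathfrak m}}(z)$ over the level-$j(n)$ cell, which equals $D_{j(n),\alpha}$ with probability $\phi(D_{j(n),\alpha})$, and factor the conditional probability as $\prod_{t\le n}p(z_t\mid w_t(\alpha))$ using the fresh-bits property of Definition~\ref{def:subroutines}. Your chronological chain rule over the interleaved calls justifies more carefully the paper's step $\mu(A_t\mid E_\alpha\cap A_1\cap\dots\cap A_{t-1})=p(z_t\mid w_t(\alpha))$, where $E_\alpha$ depends on $\mathrm{SampleChild}$ calls made after the $t$-th string is drawn; it also justifies applying Lemma~\ref{lem:path-law} to the interleaved run.
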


\begin{proof}
Let $A=\{\omega\in\bb{B}^{\bb{N}} \mid B_{\mathfrak m}(\omega)=z\ast\}$; by
Definition~\ref{def:program-semimeasure},
\[
P_{B_{\mathfrak m}}(z)=\mu(A).
\] By the assumed properties of the subroutines (Definition~\ref{def:subroutines}),
$B_{\mathfrak m}$ writes an infinite string almost surely, and membership of $A$
is determined at time step $t=n$, at which point cells $D_{j(n),\alpha}$ are being sampled.
For a positive cell $D_{j(n),\alpha}$ and $1\le t\le n$ set
\begin{align*}
E_\alpha&:=\bigl\{\omega\in\bb B^{\bb N}\bigm| D^{(j(n))}(\omega)=D_{j(n),\alpha}\bigr\},\\
A_t&:=\bigl\{\omega\in\bb B^{\bb N}\bigm|
B_{\mathfrak m}(\omega)_{(t-1)b+1}\cdots B_{\mathfrak m}(\omega)_{tb}=z_t\bigr\},
\end{align*}
where $D^{(j)}(\omega)$ is the $j$-th cell produced by the successive calls to
$\mathrm{SampleChild}$ in the run of $B_{\mathfrak m}$ on $\omega$, as in
Lemma~\ref{lem:path-law}. The
$E_\alpha$ partition the sample space up to a $\mu$-null set and
$\mu(E_\alpha)=\phi(D_{j(n),\alpha})$ by Lemma~\ref{lem:path-law}, so
\[
\mu(A)=\sum_\alpha \phi(D_{j(n),\alpha})\,\mu(A\mid E_\alpha).
\]
On $E_\alpha$ the current cell at time step $t\le n$ is the ancestor $D_{j(t),\alpha'}$ of
$D_{j(n),\alpha}$, so the $t$-th call to $\mathrm{StringSample}$ samples the $t$-th string
from $p(\cdot\mid w_t(\alpha))$; by the halting assumption of
Definition~\ref{def:subroutines}, it does so on a tail that is i.i.d.\ uniform conditional
on the bits consumed so far. Hence
\begin{align*}
\mu(A\mid E_\alpha)
&=\prod_{t=1}^{n}\frac{\mu\bigl(A_1\cap\dots\cap A_t\bigm|E_\alpha\bigr)}
                      {\mu\bigl(A_1\cap\dots\cap A_{t-1}\bigm|E_\alpha\bigr)}\\
&=\prod_{t=1}^{n}\mu\bigl(A_t\bigm|E_\alpha\cap A_1\cap\dots\cap A_{t-1}\bigr)\\
&=\prod_{t=1}^{n}p(z_t\mid w_t(\alpha)),
\end{align*}
and \eqref{eq:sampler-sum} follows.
\end{proof}

\begin{theorem}
\label{thm:sampler-correct}
Let $\mathfrak m$ be a model description and $B_{\mathfrak m}$ the monotone Turing machine of
Definition~\ref{def:specialisation}.
For every $n$ and every $z=z_1\cdots z_n\in(\bb{B}^b)^n$,
\[
e^{-\pi^2/6}\,Z_n\ \le\ P_{B_{\mathfrak m}}(z)\ \le\ e^{\pi^2/6}\,Z_n.
\]
\end{theorem}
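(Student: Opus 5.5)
Starting from Lemma~\ref{lem:sampler-identity}, I rewrite the evidence over the same partition. The positive cells $D_{j(n),\alpha}$ partition $W$ up to a $\phi$-null set, since boundaries are Lebesgue-null and $\phi$ has a density. Hence
\[
Z_n=\sum_\alpha\int_{D_{j(n),\alpha}}\prod_{t=1}^n p(z_t\mid w)\,d\phi(w).
\]
It then suffices to prove, for each positive cell, each $w\in D_{j(n),\alpha}$ and each $t\le n$, a pointwise ratio bound
\[
\Bigl|\log p\bigl(z_t\mid w_t(\alpha)\bigr)-\log p(z_t\mid w)\Bigr|\le \frac{1}{t^2}.
\]
Summing over $t$ gives
\[
e^{-\pi^2/6}\prod_t p(z_t\mid w)\le\prod_t p\bigl(z_t\mid w_t(\alpha)\bigr)\le e^{\pi^2/6}\prod_t p(z_t\mid w),
\]
because $\sum_{t\ge1}t^{-2}=\pi^2/6$. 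Integrating against $\phi$ over $D_{j(n),\alpha}$ gives $\phi(D_{j(n),\alpha})\prod_t p(z_t\mid w_t(\alpha))$ on one side and the cell's contribution to $Z_n$ on the other. Summing over $\alpha$ then yields both inequalities of Theorem~\ref{thm:sampler-correct}.

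The pointwise bound comes from Lemma~\ref{lem:block_empirical_loss_lipschitz}: $w\mapsto-\log p(s\mid w)$ is $G$-Lipschitz on $R$. Both $w$ and $w_t(\alpha)$ lie in the cube $C_{j(t),\alpha'}$, the ancestor cube at level $j(t)$. This holds because $D_{j(n),\alpha}\subseteq D_{j(t),\alpha'}\subseteq C_{j(t),\alpha'}$, using $j(t)\le j(n)$ since $j$ is nondecreasing in $t$. The centre $w_t(\alpha)$ is at distance at most half the diameter from any point of the cube, so
\[
\|w-w_t(\alpha)\|\le \tfrac12\sqrt d\,2^{-j(t)}\le \sqrt d\,2^{-j(t)}.
\]
Therefore the log-difference is at most $G\sqrt d\,2^{-j(t)}\le \widehat G\,2^{-j(t)}\le t^{-2}$, by the requirement $\widehat G\ge (L_0/\eta)\sqrt d$ in Definition~\ref{def:model-description} and by Definition~\ref{def:schedule}.

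Two points need care. First, the centre $w_t(\alpha)$ may lie outside $W$, but it lies in $R=[0,1]^d$. The Lipschitz and lower-bound conditions (1) and (2) are stated on all of $R$, so the lemma applies there. Second, the Riemann-sum identity must be matched cell by cell, so the null-set bookkeeping needs attention. Cells with $\phi(D)=0$ contribute zero to both sides. Points on shared cube faces form a $\phi$-null set, so the decomposition of $Z_n$ is exact.

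The main obstacle is therefore not analytic but a matter of ensuring that the ancestor relation behind $w_t(\alpha)$ is consistent with the monotonicity of the schedule. This is what makes the per-step error summable in $t$, independently of $n$.
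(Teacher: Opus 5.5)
Your proposal is correct and follows essentially the same route as the paper: the pointwise bound $\bigl|\log p(z_t\mid w_t(\alpha))-\log p(z_t\mid w)\bigr|\le t^{-2}$ from Lemma~\ref{lem:block_empirical_loss_lipschitz} and the schedule is summed over $t$, integrated over each positive cell, and combined with Lemma~\ref{lem:sampler-identity} and the decomposition of $Z_n$ over cells up to a $\phi$-null set. Your extra remarks are sound refinements of points the paper uses implicitly, namely the half-diameter bound, the monotonicity of $j(t)$, and the fact that the centre lies in $R$ but possibly not in $W$.
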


\begin{proof}
Let $D_{j(n),\alpha}$ be a positive cell and $w\in D_{j(n),\alpha}$. For each $t\le n$, both
$w_t(\alpha)$ and $w$ lie in the cube $C_{j(t),\alpha'}$ of the ancestor
$D_{j(t),\alpha'}$ of $D_{j(n),\alpha}$, so
$\|w_t(\alpha)-w\|\le\sqrt d\,2^{-j(t)}$ and, by
Lemma~\ref{lem:block_empirical_loss_lipschitz} and Definition~\ref{def:schedule},
\[
\bigl|\log p(z_t\mid w_t(\alpha))-\log p(z_t\mid w)\bigr|\le G\sqrt d\,2^{-j(t)}\le \widehat G\,2^{-j(t)}\le t^{-2} .
\]
Summing over $t\le n$ and using $\sum_{t\ge 1}t^{-2}=\pi^2/6$,
\[
e^{-\pi^2/6}\prod_{t=1}^n p(z_t\mid w)
\ \le\ \prod_{t=1}^n p(z_t\mid w_t(\alpha))
\ \le\ e^{\pi^2/6}\prod_{t=1}^n p(z_t\mid w).
\]
Integrating over $w\in D_{j(n),\alpha}$ against $\phi$, summing over the positive cells,
which partition $W$ up to a $\phi$-null set, and applying Lemma~\ref{lem:sampler-identity}
together with $Z_n=\sum_{\alpha}\int_{D_{j(n),\alpha}}\prod_{t=1}^n p(z_t\mid w)\,d\phi(w)$,
gives the two bounds.
\end{proof}

\begin{corollary}
\label{cor:sampler-free-energy}
If in addition the hypotheses of Theorem~\ref{thm:main-slt} hold and $X^n=X_1\cdots X_n$ is
drawn i.i.d.\ from the true distribution, then
\[
-\log P_{B_{\mathfrak m}}(X^n)
=
nL_n(w_0)+\lambda\log n-(m-1)\log\log n+O_{\bb P}(1).
\]
\end{corollary}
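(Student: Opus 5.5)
The plan is to combine Theorem~\ref{thm:sampler-correct} with Watanabe's free energy asymptotics (Theorem~\ref{thm:main-slt}). First I take $-\log$ of the two-sided bound in Theorem~\ref{thm:sampler-correct}, applied with $z=X^n$ for each realisation of the data. This gives the deterministic, data-uniform estimate
\[
\bigl|-\log P_{B_{\mathfrak m}}(X^n)-F_n\bigr|\le \pi^2/6,\qquad F_n=-\log Z_n,
\]
valid for every $n$ and every outcome $X^n\in(\bb B^b)^n$. The constant $\pi^2/6$ does not depend on $n$, on the data, or on the model beyond the choice of schedule $j(t)$.

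Next I invoke Theorem~\ref{thm:main-slt}. Under its hypotheses, for $X^n$ drawn i.i.d.\ from the true distribution, it yields $F_n=nL_n(w_0)+\lambda\log n-(m-1)\log\log n+R_n$ with $R_n=O_{\bb P}(1)$. Here $Z_n$ is precisely the evidence of Definition~\ref{def:block_empirical_loss}, with the prior $\phi$ and the model $p(\cdot\mid w)$ restricted to $W$. Substituting gives
\[
-\log P_{B_{\mathfrak m}}(X^n)=nL_n(w_0)+\lambda\log n-(m-1)\log\log n+R_n+\delta_n,
\]
with $|\delta_n|\le\pi^2/6$ surely. The sum of an $O_{\bb P}(1)$ sequence and a uniformly bounded sequence is $O_{\bb P}(1)$: given $\varepsilon$, take the tightness constant $K$ for $R_n$ and use $K+\pi^2/6$. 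This closes the argument.

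The only point needing care is checking that the evidence and empirical loss in Theorem~\ref{thm:main-slt} are the same objects as in Theorem~\ref{thm:sampler-correct}. In particular, the normalisation of Remark~\ref{rem:normalisation} (the affine change of parameters $A$) must not alter $Z_n$, $L_n(w_0)$, $\lambda$ or $m$. This holds because $Z_n$ is an integral against the prior measure, which is invariant under pushforward. The quantity $L_n(w_0)$ depends only on the distribution $p(\cdot\mid w_0)$. The pair $(\lambda,m)$ is invariant under the diffeomorphism $A$, being a birational invariant of the KL zero locus. I expect this bookkeeping to be the only genuine obstacle; the probabilistic content is entirely inherited from Theorem~\ref{thm:main-slt}.
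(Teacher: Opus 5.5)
Your proposal is correct and is the paper's argument: take $-\log$ of the two-sided bound in Theorem~\ref{thm:sampler-correct} to get $\bigl|\log P_{B_{\mathfrak m}}(X^n)+\log Z_n\bigr|\le\pi^2/6$, then apply Theorem~\ref{thm:main-slt}. The normalisation bookkeeping is unnecessary, since the paper works in the coordinates $R=[0,1]^d$ throughout; if you keep it, the cleaner reason $(\lambda,m)$ is unchanged is that $\zeta(z)=\int_W K(w)^z\,d\phi(w)$ is literally invariant under pushforward along the affine map $A$, not an appeal to birational invariance of the zero locus.
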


\begin{proof}
By Theorem~\ref{thm:sampler-correct},
$\bigl|\log P_{B_{\mathfrak m}}(X^n)+\log Z_n\bigr|\le\pi^2/6$; apply
Theorem~\ref{thm:main-slt}.
\end{proof}

\begin{theorem}
\label{thm:universal-aware}
Let $(W,\{p(\cdot\mid w)\}_{w\in W},\phi)$ be a computable Bayesian model
(Definition~\ref{def:computable-bayesian-model}) with model description $\mathfrak m$, and let
$B_{\mathfrak m}$ be the monotone Turing machine of Definition~\ref{def:specialisation}.
For every observed string $z\in(\bb{B}^b)^n$,
\[
M(z)\ \ge\ 2^{-\ell(\langle B_{\mathfrak m}\rangle)}\,e^{-\pi^2/6}\,Z_n,
\]
and consequently
\[
-\log M(z)\ \le\ -\log Z_n + O(1),
\]
the $O(1)$ being independent of $n$ and of the data. If in addition the hypotheses of
Theorem~\ref{thm:main-slt} hold and $X^n=X_1\cdots X_n$ is drawn i.i.d.\ from the true
distribution, then
\[
-\log M(X^n)
\le
nL_n(w_0)
+
\lambda \log n
-
(m-1)\log\log n
+
O_{\bb P}(1).
\]
\end{theorem}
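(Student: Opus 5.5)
The plan is to combine the dominance theorem (Theorem~\ref{thm:dominance}) with the lower bound of Theorem~\ref{thm:sampler-correct}, and then transport Watanabe's asymptotics through the resulting inequality. First, $B_{\mathfrak m}$ is a monotone Turing machine by Definition~\ref{def:specialisation}: it is $\mathsf{BayesSampler}$ with $\mathfrak m$ hard-wired. So Theorem~\ref{thm:dominance} applies and gives $M(z)\ge 2^{-\ell(\langle B_{\mathfrak m}\rangle)}P_{B_{\mathfrak m}}(z)$ for every finite string $z$. For $z\in(\bb B^b)^n$, the left inequality of Theorem~\ref{thm:sampler-correct} gives $P_{B_{\mathfrak m}}(z)\ge e^{-\pi^2/6}Z_n$. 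Chaining the two yields the first displayed bound.

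Next, take $-\log$ of both sides, which is monotone decreasing. This gives
\[
-\log M(z)\le -\log Z_n+\ell(\langle B_{\mathfrak m}\rangle)\log 2+\pi^2/6 .
\]
The additive constant $c_{\mathfrak m}:=\ell(\langle B_{\mathfrak m}\rangle)\log 2+\pi^2/6$ depends only on $\mathfrak m$ and the fixed choice of $\mathcal U$. The point to stress is that $B_{\mathfrak m}$ was built from $\mathfrak m$ alone and reads neither $n$ nor $z$, so $c_{\mathfrak m}$ is uniform in $n$ and in the data.

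Alternatively, one can route the dominance through $\widehat B$ (Definition~\ref{def:interpreter}), using the program prefix $\langle\widehat B\rangle\overline{\mathfrak m}$. This gives the same conclusion with constant $(\ell(\langle\widehat B\rangle)+\ell(\overline{\mathfrak m}))\log 2+\pi^2/6$, and makes explicit that the constant pays for the model description.

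For the probabilistic statement, substitute $z=X^n$, which is legitimate because the deterministic inequality holds for every string in $(\bb B^b)^n$. Then apply Theorem~\ref{thm:main-slt} in the form $-\log Z_n=nL_n(w_0)+\lambda\log n-(m-1)\log\log n+R_n$ with $R_n=O_{\bb P}(1)$. Adding the deterministic constant $c_{\mathfrak m}$ to a tight sequence $R_n$ leaves it $O_{\bb P}(1)$, which gives the final bound. Equivalently, one can apply Corollary~\ref{cor:sampler-free-energy} directly to $-\log P_{B_{\mathfrak m}}(X^n)$ and add $\ell(\langle B_{\mathfrak m}\rangle)\log 2$.

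The main point requiring care is not analytic but a matter of matching conventions. The dominance constant must be stated for the same notion of the semimeasure $P_T$ (Definition~\ref{def:program-semimeasure}) and minimal programs used in \eqref{eq:M_sum}. We also need that the computable model of Definition~\ref{def:computable-bayesian-model} satisfying Watanabe's hypotheses has the same $L_n$ and $Z_n$ as in Appendix~\ref{sec:setup:slt}, after the normalisation of Remark~\ref{rem:normalisation}. That normalisation is a change of variables, so $Z_n$ and $L_n(w_0)$ are unchanged, and $\lambda$ and $m$ are unchanged because they are invariants of the integral.
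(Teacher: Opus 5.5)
Your proposal is correct and matches the paper's proof: you chain dominance (Theorem~\ref{thm:dominance} with $v=\langle B_{\mathfrak m}\rangle$) with the lower bound of Theorem~\ref{thm:sampler-correct}, take $-\log$ to get the constant $\ell(\langle B_{\mathfrak m}\rangle)\log 2+\pi^2/6$, which is uniform in $n$ and the data, and then apply Theorem~\ref{thm:main-slt} to $-\log Z_n$. Your extra remarks are sound additions that the paper does not spell out: routing the dominance through $\langle\widehat B\rangle\overline{\mathfrak m}$, and checking that the normalisation of Remark~\ref{rem:normalisation} leaves $Z_n$, $\lambda$ and $m$ unchanged.
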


\begin{proof}
By Theorem~\ref{thm:sampler-correct}, $P_{B_{\mathfrak m}}(z)\ge e^{-\pi^2/6}Z_n$.
The machine $B_{\mathfrak m}$ of Definition~\ref{def:specialisation} is
fixed independently of $n$ and of $z$, so dominance (Theorem~\ref{thm:dominance} with
$v=\langle B_{\mathfrak m}\rangle$) gives
$M(z)\ge 2^{-\ell(\langle B_{\mathfrak m}\rangle)}P_{B_{\mathfrak m}}(z)\ge
2^{-\ell(\langle B_{\mathfrak m}\rangle)}e^{-\pi^2/6}Z_n$. Taking $-\log$ gives the second
inequality with $O(1)=\ell(\langle B_{\mathfrak m}\rangle)\log 2+\pi^2/6$, which depends on
$\mathfrak m$ (in particular on $b$) but not on $n$ or the data. Applying
Theorem~\ref{thm:main-slt} to $-\log Z_n$ yields the third.
\end{proof}

\begin{remark}
\label{rem:regular-case}
For regular models the analogue of Theorem~\ref{thm:universal-aware} with the evidence in
place of $M$ is considered in \cite[Thm.~2.3]{ClarkeBarron1990} and
\cite[Thm.~3.74]{Hutter2005}: under suitable hypotheses including the condition that the
Fisher information matrix $I(w_0)$ is nonsingular at a point $w_0\in W$, for
$X_1,\dots,X_n$ drawn i.i.d.\ from $p(\cdot\mid w_0)$,
\[
\bb E\left[\log\frac{\prod_{i=1}^n p(X_i\mid w_0)}{Z_n}\right]
\ \le\
\log\frac{1}{\varphi(w_0)}+\frac{d}{2}\log\frac{n}{2\pi}+\frac{1}{2}\log\det I(w_0)+o(1).
\]
The bound concerns the evidence $Z_n$ and not $M$; to bound $-\log M$ one must exhibit a
machine whose semimeasure is comparable to $Z_n$. For continuously parameterized classes
this step is discussed in \cite[Sec.~3.7.2]{Hutter2005}; Theorem~\ref{thm:universal-aware}
proves it, for singular models.
\end{remark}

\begin{remark}
\label{rem:llc-mdl}
In \cite{Urdshals2025smdl} a two-part code is constructed: a code word
$\llbracket p\rrbracket$ describing a distribution $p$ in the model class is transmitted
first, followed by the message $x^n=x_1\cdots x_n$ encoded with respect to $p$, written
$\llbracket x^n\rrbracket_p$, the combined length being
$\ell(\llbracket p\rrbracket)+\sum_{i=1}^n\log\frac{1}{p(x_i)}$. The \emph{redundancy} of
the code is its excess length over the encoding that uses the data distribution itself: for
$x_1,\dots,x_n$ drawn i.i.d.\ from $q$,
\[
R_n\;:=\;\ell(\llbracket p\rrbracket)+\ell(\llbracket x^n\rrbracket_p)
-\sum_{i=1}^n\log\frac{1}{q(x_i)}
\;=\;\ell(\llbracket p\rrbracket)+\sum_{i=1}^n\log\frac{q(x_i)}{p(x_i)}\,.
\]
The main theorem of \cite{Urdshals2025smdl} asserts that the code can be chosen so that,
for every realizable data generating distribution $q$ in the model class,
\[
R_n\;=\;\lambda\log n-(m-1)\log\log n+O_{\bb P}(1),
\]
with $\lambda$ the learning coefficient of $q$ for the model and $m$ its multiplicity. This
two-part code is the core inspiration for the construction of $B_{\mathfrak m}$: our
machine is a generic algorithm, one for every computable Bayesian model,
which, when its input tape is filled with uniformly random bits, produces the output of a
code of this format.
\end{remark}

\begin{remark}
\label{rem:related-work}
We can represent the sequence of positive cells sampled by $B_{\mathfrak m}$ as a directed
graphical model: the nodes are the positive cells, and each edge is labelled by the
probability of its target being sampled. Doing this exhibits a similarity, shown in
Figure~\ref{fig:two-trees}, between our
construction and that of \cite[Lem.~5.3]{HennickDellago2026}, whose trees have their
branches pruned according to the following rule: for a threshold $t>0$, the node
$D_{j,\alpha}$ is removed, together with its descendants, if and only if
\[
\{\,w\in W \mid K(w)<t\,\}\cap D_{j,\alpha}=\varnothing,
\]
with $K$ as in Definition~\ref{def:HK-binary}; the loss of \cite{HennickDellago2026} is
the KL-divergence $\widehat K$, which agrees with $K$ in the realizable case
(Remark~\ref{rem:rfv-uniqueness}). Writing $\mathcal T_t$ for the tree of
retained cells and $\partial\mathcal T_t$ for its set of infinite branches, the map sending
a parameter to its branch of nested cells pushes the prior forward to a measure $\phi^*$
with $\phi^*(\partial\mathcal T_t)=\phi(K<t)$, and the learning coefficient $\lambda$ of
the model, Definition~\ref{def:learning-coefficient}, is there shown to be the scaling
exponent of this measured boundary as $t\to 0$ \cite[Lem.~5.6]{HennickDellago2026}:
\[
\lambda\ =\ \lim_{t\to 0}\ \frac{\log \phi^*(\partial\mathcal T_t)}{\log t}.
\]
\end{remark}

\begin{figure}[tp]
\centering
\resizebox{\textwidth}{!}{%
\begin{tikzpicture}[>={Stealth[length=1.8mm]}, line cap=round]
\node at (3.6,5.3) {\small The tree of positive cells.};
\node (r) at (3.6,4.6) {$D_{0,0}$};
\node (a) at (1.2,2.3) {$D_{1,0}$};
\node (b) at (6.0,2.3) {$D_{1,1}$};
\node (c) at (0.0,0) {$D_{2,0}$};
\node (d) at (2.4,0) {$D_{2,1}$};
\node (e) at (4.8,0) {$D_{2,2}$};
\node (f) at (7.2,0) {$D_{2,3}$};
\draw[->] (r) -- (a) node[midway, left=1pt] {$\scriptstyle\frac{\phi(D_{1,0})}{\phi(D_{0,0})}$};
\draw[->] (r) -- (b) node[midway, right=1pt] {$\scriptstyle\frac{\phi(D_{1,1})}{\phi(D_{0,0})}$};
\draw[->] (a) -- (c) node[midway, left=1pt] {$\scriptstyle\frac{\phi(D_{2,0})}{\phi(D_{1,0})}$};
\draw[->] (a) -- (d) node[midway, right=1pt] {$\scriptstyle\frac{\phi(D_{2,1})}{\phi(D_{1,0})}$};
\draw[->] (b) -- (e) node[midway, left=1pt] {$\scriptstyle\frac{\phi(D_{2,2})}{\phi(D_{1,1})}$};
\draw[->] (b) -- (f) node[midway, right=1pt] {$\scriptstyle\frac{\phi(D_{2,3})}{\phi(D_{1,1})}$};
\foreach \x in {0.0,2.4,4.8,7.2}{ \node at (\x,-0.6) {$\vdots$}; }
\node at (12.8,5.3) {\small The tree of Hennick--Dellago at threshold $t$.};
\node (rr) at (12.8,4.6) {$D_{0,0}$};
\node (aa) at (10.4,2.3) {$D_{1,0}$};
\node (bb) at (15.2,2.3) {$D_{1,1}$};
\node (cc) at (9.2,0) {$D_{2,0}$};
\node[gray!70] (dd) at (11.6,0) {$D_{2,1}$};
\node[gray!70] (ee) at (14.0,0) {$D_{2,2}$};
\node (ff) at (16.4,0) {$D_{2,3}$};
\draw[->] (rr) -- (aa);
\draw[->] (rr) -- (bb);
\draw[->] (aa) -- (cc);
\draw[->, dashed, gray!70] (aa) -- (dd);
\draw[->, dashed, gray!70] (bb) -- (ee);
\draw[->] (bb) -- (ff);
\node at (9.2,-0.6) {$\vdots$};
\node at (16.4,-0.6) {$\vdots$};
\end{tikzpicture}%
}
\caption{Left: a directed graphical model presentation of the sequences of nested positive
cells that may be taken, along with their probabilities, by the algorithm $B_{\mathfrak m}$
of Definition~\ref{def:specialisation}, for dimension $d=1$ and compact parameter set
$W=[0,1]$. Right: the tree construction of \cite[Lem.~5.3]{HennickDellago2026}, which for
$t>0$ removes the nodes $D_{j,\alpha}$ (grey) such that
$\{w\in W \mid K(w)<t\}\cap D_{j,\alpha}=\varnothing$, along with their descendants.}
\label{fig:two-trees}
\end{figure}

\section*{Acknowledgements}

The majority of this work was carried out while the authors were at Timaeus.
This project is funded by the Advanced Research + Invention Agency (ARIA).

\appendix

\section{The Solomonoff distribution}
\label{sec:setup:universal}

Following \cite{Hutter:24uaibook2} we recall the elements of algorithmic information theory.

\begin{definition}
\label{def:binary-strings}
Let $\bb{B}:=\{0,1\}$.  Write $\bb{B}^\ast$ for finite binary strings and $\bb{B}^{\bb{N}}$ for
infinite binary strings.
For $x,y\in\bb{B}^\ast\cup\bb{B}^{\bb{N}}$, write $x\sqsubseteq y$ if $x$ is a \emph{prefix} of $y$.
\end{definition}

\begin{definition}
\label{def:cylinders-measure}
For $x\in\bb{B}^\ast$, define the \emph{cylinder set}
\[
\Gamma_x:=\{\omega\in\bb{B}^{\bb{N}} \mid x\sqsubseteq \omega\}.
\]
Let $\mu$ denote the \emph{uniform product measure} on $\bb{B}^{\bb{N}}$, characterised by
$\mu(\Gamma_x)=2^{-\ell(x)}$, where $\ell(x)$ denotes the length of $x$.
\end{definition}

\begin{definition}[{cf.\ \cite[Definition~2.7.1]{Hutter:24uaibook2}}]
\label{def:monotone-tm}
A \emph{monotone Turing machine} is given by the data:
\begin{itemize}
\item a state set $\{1,\dots,k\}$ for some $k\ge 1$,
\item one input tape, one output tape, and $j\ge 1$ work tapes,
\item a transition function $\delta$,
\end{itemize}
required to satisfy the conditions:
\begin{itemize}
\item all tapes are binary, with no blank symbol, and the work tapes are initialised with zeros,
\item the input tape is read-only and unidirectional (its head can only move from left to right),
\item the output tape is write-only and unidirectional.
\end{itemize}
For $\omega\in\bb{B}^{\bb{N}}$ we write $T(\omega)\in\bb{B}^\ast\cup\bb{B}^{\bb{N}}$ for the
output written during the run of $T$ on input tape $\omega$. For $v,y\in\bb{B}^\ast$ we write
$T(v)=y\ast$ if, in the run on any input
$\omega\sqsupseteq v$, the input head has consumed exactly $v$ at the step in which the
last bit of $y$ is written.
\end{definition}

\begin{definition}
\label{def:program-semimeasure}
Let $T$ be a monotone Turing machine. For $y\in\bb{B}^\ast$, the set of \emph{minimal
programs} \cite[Definition~2.7.1]{Hutter:24uaibook2}
\[
\mathcal V_{T,y}:=\{p\in\bb{B}^\ast \mid T(p)=y\ast\}
\]
is prefix-free, and
\[
\{\omega\in\bb{B}^{\bb{N}} \mid y\sqsubseteq T(\omega)\}
=
\bigsqcup_{p\in \mathcal V_{T,y}}\Gamma_p
\]
\cite[Section~3.8]{Hutter:24uaibook2}. Define the \emph{semimeasure} induced by $T$ on
cylinders by
\begin{equation}
\label{eq:probability}
P_T(\Gamma_y)
:=
\mu\bigl(\{\omega\in\bb{B}^{\bb{N}} \mid y\sqsubseteq T(\omega)\}\bigr)
=
\sum_{p\in \mathcal V_{T,y}} 2^{-\ell(p)},
\end{equation}
the probability that the output of $T$ begins with $y$ when the input tape is filled with
uniformly random bits. We abbreviate $P_T(y):=P_T(\Gamma_y)$.
\end{definition}

\begin{definition}[{cf.\ \cite[Definition~2.7.2]{Hutter:24uaibook2}}]
\label{def:universal-mtm}
Every monotone Turing machine has a canonical prefix-free encoding $T\mapsto\langle T\rangle$
as a binary string. A monotone Turing
machine $\mathcal U$ is \emph{universal} if
\[
\mathcal{U}(\langle T\rangle\omega)=T(\omega)
\qquad\text{for every monotone Turing machine $T$ and every $\omega\in\bb{B}^{\bb{N}}$.}
\]
Universal machines exist \cite[Definition~2.7.2]{Hutter:24uaibook2}; we fix one, the
\emph{reference machine} $\mathcal U$ \cite[Remark~2.7.7]{Hutter:24uaibook2}.
\end{definition}

\begin{definition}[{cf.\ \cite[Definition~3.8.2]{Hutter:24uaibook2}}]
\label{def:universal-semimeasure}
The \emph{Solomonoff distribution} (or universal a priori probability) is
\[
M(\Gamma_y) := \sum_{p\,:\,\mathcal U(p)=y\ast} 2^{-\ell(p)},
\]
summing over the minimal programs $\mathcal V_{\mathcal U,y}$ of
Definition~\ref{def:program-semimeasure}. By \eqref{eq:probability},
$M(\Gamma_y)=P_{\mathcal U}(\Gamma_y)$: the probability that the output of $\mathcal U$
begins with $y$ on a uniformly random input tape. For $y\in\bb{B}^\ast$ we abbreviate
$M(y):=M(\Gamma_y)$, as in \eqref{eq:M_sum}.
\end{definition}

\begin{theorem}
\label{thm:dominance}
The universal semimeasure $M$ dominates the contribution of every input prefix
\cite{Hutter2005,Hutter:24uaibook2}: for all $v,y\in\bb{B}^\ast$,
\[
M(y) \;\ge\; 2^{-\ell(v)}\,
\mu\bigl(\{\omega\in\bb{B}^{\bb{N}} \mid y\sqsubseteq \mathcal U(v\omega)\}\bigr).
\]
In particular, taking $v=\langle T\rangle$ for a monotone Turing machine $T$ gives
$M(y)\ge 2^{-\ell(\langle T\rangle)}\,P_T(y)$.
\end{theorem}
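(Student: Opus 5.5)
The bound follows from the universality of $\mathcal U$ and a change of variables on the input tape. Fix $v,y\in\bb{B}^\ast$ and set
\[
S_v:=\{\omega\in\bb{B}^{\bb{N}}\mid y\sqsubseteq\mathcal U(v\omega)\}.
\]
The inequality is proved by comparing $S_v$ with the set $\{\omega'\mid y\sqsubseteq\mathcal U(\omega')\}$, whose $\mu$-measure is $M(y)$ by Definition~\ref{def:universal-semimeasure} and \eqref{eq:probability}. The key step is the containment
\[
\Gamma_v\cap\{\omega'\mid y\sqsubseteq\mathcal U(\omega')\}\ \supseteq\ \{v\omega\mid\omega\in S_v\}.
\]
This holds trivially, since $\omega'=v\omega$ lies in $\Gamma_v$ and satisfies $y\sqsubseteq\mathcal U(\omega')$ by the definition of $S_v$. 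In fact the two sets are equal, because every element of $\Gamma_v$ has the form $v\omega$.

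Next we compute the measure of the image. The shift map $\sigma_v:\omega\mapsto v\omega$ is a bijection $\bb{B}^{\bb{N}}\to\Gamma_v$. It scales $\mu$ by $2^{-\ell(v)}$: it sends the cylinder $\Gamma_x$ to $\Gamma_{vx}$, so $\mu(\sigma_v(\Gamma_x))=2^{-\ell(v)}\mu(\Gamma_x)$. Cylinders form a $\pi$-system generating the Borel $\sigma$-algebra, so by uniqueness of measures (Dynkin) the pushforward of $\mu$ under $\sigma_v$ equals $2^{-\ell(v)}\mu$ restricted to $\Gamma_v$.

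We also need $S_v$ to be measurable. By Definition~\ref{def:program-semimeasure}, the set $\{\omega'\mid y\sqsubseteq\mathcal U(\omega')\}$ is a countable disjoint union of cylinders $\Gamma_p$, hence open. Its preimage under the continuous map $\sigma_v$ is $S_v$, which is therefore open and in particular measurable. Combining these facts,
\[
M(y)\ \ge\ \mu\bigl(\Gamma_v\cap\{\omega'\mid y\sqsubseteq\mathcal U(\omega')\}\bigr)\ =\ \mu(\sigma_v(S_v))\ =\ 2^{-\ell(v)}\mu(S_v),
\]
which is the claimed inequality.

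For the special case, take $v=\langle T\rangle$. Definition~\ref{def:universal-mtm} gives $\mathcal U(\langle T\rangle\omega)=T(\omega)$ for every $\omega$, so $S_v=\{\omega\mid y\sqsubseteq T(\omega)\}$. By \eqref{eq:probability} its measure is $P_T(y)$, which yields $M(y)\ge 2^{-\ell(\langle T\rangle)}P_T(y)$.

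There is no real obstacle in this argument. The only point that needs care is the measure-theoretic identity for the pushforward of $\mu$ under the shift $\sigma_v$, and that is handled by checking it on cylinders and extending by the $\pi$-$\lambda$ theorem.
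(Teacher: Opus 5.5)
Your proof is correct and complete. The paper gives no proof of this statement and only cites the references. Your argument is the standard one that the measure-theoretic phrasing of the statement invites: identify $\Gamma_v\cap\{\omega'\mid y\sqsubseteq\mathcal U(\omega')\}$ with $\sigma_v(S_v)$, then rescale.

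There is one wording slip. The identity you check on cylinders and then use, $\mu(\sigma_v(A))=2^{-\ell(v)}\mu(A)$, says that $(\sigma_v)_*\bigl(2^{-\ell(v)}\mu\bigr)=\mu|_{\Gamma_v}$, equivalently $(\sigma_v)_*\mu=2^{\ell(v)}\mu|_{\Gamma_v}$ (compare total masses). It does not say $(\sigma_v)_*\mu=2^{-\ell(v)}\mu|_{\Gamma_v}$, as you wrote. Your displayed chain of inequalities is unaffected.

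It is worth seeing what the measure route gains over a direct comparison of program sums, $\sum_{p\in\mathcal V_{\mathcal U,y}}2^{-\ell(p)}\ge\sum_{q\in\mathcal V_{T,y}}2^{-\ell(\langle T\rangle q)}$. That comparison needs $\langle T\rangle q\in\mathcal V_{\mathcal U,y}$, i.e.\ that $\mathcal U$ writes the last bit of $y$ exactly when it has consumed $\langle T\rangle q$. Definition~\ref{def:universal-mtm} does not provide this, since it only equates outputs on infinite inputs. Your argument uses nothing beyond that output-level equality and the openness of $\{\omega'\mid y\sqsubseteq\mathcal U(\omega')\}$.
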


\section{Singular learning theory (SLT)}
\label{sec:setup:slt}

The asymptotics of the free energy associated to the Bayesian evidence are (for singular
models) given by the theorems of Watanabe
\cite{watanabeAlgebraicGeometryStatistical2009,Watanabe2018}. Since then a range of
introductions to and developments of the field have appeared
\cite{wei2022deep,lin2011algebraic,waring2021geometric,lau2024locallearningcoefficientsingularityaware,hoogland2024developmental,wang2024differentiation,chen2023tms1,murfet2025programssingularities}.
We give a minimal account here.

Throughout, $\log$ denotes the natural logarithm and $\log_2$ denotes base-$2$ logarithm.
\medskip

We work in the binary setting used throughout the paper. Fix:
\begin{itemize}
\item a dimension $d\in\bb{Z}_{>0}$ and a compact parameter set $W\subseteq\bb{R}^d$,
\item a statistical model $p(x\vert w)$ for $x\in\bb{B}$, $w\in W$,
\item a true distribution $q(x)$ for $x\in\bb{B}$,
\item a prior probability measure $\phi$ on $W$ with density $\varphi$ (so $d\phi(w)=\varphi(w)\,dw$),
\item an integer $n\ge 1$,
\item data $X_1,\dots,X_n\overset{\mathrm{i.i.d.}}{\sim} q$.
\end{itemize}

\begin{definition}
\label{def:model-true-ldr}
Define the \emph{log density ratio}:
\[
f(x,w):=\log\frac{q(x)}{p(x\vert w)},
\qquad (x\in\bb{B},\ w\in W).
\]
We implicitly assume the usual absolute-continuity condition
$q(x)>0 \Rightarrow p(x\vert w)>0$ for all $w\in W$ so that $f$ is well-defined.
\end{definition}

\begin{definition}
\label{def:losses-KL}
Define the \emph{population loss} and \emph{empirical loss}:
\[
L(w):=-\bb{E}\big[\log p(X\vert w)\big],
\qquad
L_n(w):=-\frac1n\sum_{i=1}^n \log p(X_i\vert w),
\]
where the expectation is taken under $q$.

Let $w_0\in W$ be a minimizer of $L(w)$ (one exists since $L$ is continuous on the compact set $W$ under fundamental condition (I) below) and define the minimizer set
\[
W_0:=\{w\in W \mid L(w)=L(w_0)\}.
\]
\end{definition}

\begin{definition}
\label{def:HK-binary}
For $w\in W$, define the KL-divergence
\[
\widehat K(w):=\sum_{x\in\bb{B}}q(x)\log\frac{q(x)}{p(x\vert w)}
=\bb{E}\big[f(X,w)\big].
\]
Define the \emph{(centered) KL function}
\[
K(w):=\widehat K(w)-\widehat K(w_0)
=L(w)-L(w_0)\ge 0,
\]
so that $K(w)=0$ if and only if $w\in W_0$.
\end{definition}

\begin{definition}
\label{def:empirical-KL}
Define the \emph{empirical KL-divergence} and its centered version:
\[
\widehat K_n(w)
:=
\frac1n\sum_{i=1}^n f(X_i,w)
=
\frac1n\sum_{i=1}^n \log\frac{q(X_i)}{p(X_i\vert w)},
\]
\[
K_n(w):=\widehat K_n(w)-\widehat K_n(w_0)
=L_n(w)-L_n(w_0),
\]
so that $K_n(w_0)=0$.
\end{definition}

\begin{definition}
\label{def:evidence}
The \emph{partition function} (Bayesian evidence) is
\[
Z_n:=\int_W \prod_{i=1}^n p(X_i\vert w)\,\varphi(w)\,dw
=\int_W \exp\!\big(-nL_n(w)\big)\,\varphi(w)\,dw.
\]
Equivalently,
\[
Z_n
=
\exp\!\big(-nL_n(w_0)\big)
\int_W \exp\!\big(-nK_n(w)\big)\,\varphi(w)\,dw.
\]
\end{definition}

\begin{definition}
\label{def:learning-coefficient}
Define the \emph{zeta function} for $\Re(z)>0$, where $\Re(z)$ denotes the real part of $z$,
by
\[
\zeta(z):=\int_W K(w)^z\,\varphi(w)\,dw;
\]
under fundamental conditions (I) and (II) below it admits a meromorphic continuation to $\bb{C}$.
Let $-\lambda$ be the pole of $\zeta$ with largest real part, and let $m$ be its order.
Then $\lambda>0$ is the \emph{learning coefficient} and $m$ its \emph{multiplicity}.
\end{definition}

The learning-theoretic asymptotics depend on (i) analyticity of the log density ratio and
(ii) mild integrability and regularity assumptions on $(W,\varphi)$.

\begin{definition}
\label{def:fundamental-I}
Fix $s\ge 2$. The tuple $(q,p,\varphi)$ satisfies \emph{fundamental condition (I) with index $s$}
if the following hold.
\begin{enumerate}
\item There exists an open set $W^{(\mathbb C)}\subset\mathbb C^d$ such that
\begin{enumerate}
\item[(a)] $W\subset W^{(\mathbb C)}\cap\mathbb R^d$,
\item[(b)] the map $w\mapsto f(\cdot,w)$ is an $L^s(q)$-valued complex-analytic function on
$W^{(\mathbb C)}$,
\item[(c)] the envelope $\bar f(x):=\sup_{w\in W^{(\mathbb C)}}|f(x,w)|$ satisfies $\bar f\in L^s(q)$.
\end{enumerate}
\item There exists $\varepsilon_1>0$ such that, defining
\[
Q(x):=\sup\{\,p(x\vert w) \mid w\in W,\ K(w)\le \varepsilon_1\,\},
\]
we have the integrability condition
\[
\sum_{x\in\bb{B}} \bar f(x)^2\,Q(x)\,q(x)<\infty.
\]
\end{enumerate}
\end{definition}

\begin{remark}
On the finite space $\bb{B}$, item 2 of fundamental condition (I) follows from item
1(c), since $Q\le 1$ and the sum is finite; we retain it as stated to match the general form
of \cite{Watanabe2018}.
\end{remark}

\begin{definition}
\label{def:fundamental-II}
The tuple $(W,\varphi)$ satisfies \emph{fundamental condition (II)} if:
\begin{enumerate}
\item $W$ is a compact semianalytic set of the form
\[
W=\{w\in\mathbb R^d \mid \pi_1(w)\ge 0,\dots,\pi_J(w)\ge 0\},
\]
where each $\pi_j$ is real-analytic on some open neighborhood $W^{(\mathbb R)}\subset\mathbb R^d$
containing $W$.
\item The prior density factors as $\varphi(w)=\varphi_1(w)\varphi_2(w)$, where
$\varphi_1(w)>0$ is $C^\infty$ on $W^{(\mathbb R)}$ and $\varphi_2(w)\ge 0$ is real-analytic on
$W^{(\mathbb R)}$.
\end{enumerate}
\end{definition}

\begin{definition}
\label{def:rfv}
Define
\[
f_0(x,w):=\log\frac{p(x\vert w_0)}{p(x\vert w)}
=f(x,w)-f(x,w_0),
\qquad (x\in\bb{B},\ w\in W),
\]
so that $\bb{E}\big[f_0(X,w)\big]=K(w)$, where $X\sim q$.
The model has \emph{relatively finite variance} if there exists $c_0>0$ such that
\[
\bb{E}\big[f_0(X,w)^2\big]\le c_0\,K(w)
\qquad\text{for all }w\in W.
\]
\end{definition}

\begin{remark}
\label{rem:rfv-uniqueness}
For $w\in W_0$ the condition forces $\bb{E}\big[f_0(X,w)^2\big]=0$, so all optimal parameters
induce the same distribution $q$-almost surely; in particular Definition~\ref{def:rfv}
does not depend on the choice of $w_0$. In the realizable case $q(x)=p(x\vert w_0)$ we
have $f_0=f$ almost surely and $K=\widehat K$.
\end{remark}

\begin{theorem}
\label{thm:main-slt}
Assume the standard SLT hypotheses:
fundamental condition (I) with some index $s\ge 2$,
fundamental condition (II),
and relatively finite variance (Definition~\ref{def:rfv}); assume moreover that
$\varphi\bigl(\{w\in W \mid 0<K(w)\le\varepsilon\}\bigr)>0$ for every $\varepsilon>0$, so that
the pole defining $(\lambda,m)$ exists.
Let $(\lambda,m)$ be the learning coefficient and multiplicity from
Definition~\ref{def:learning-coefficient}.
Then the normalized free energy satisfies
\[
F_n^{(0)}
:=
-\log\!\int_W \exp\!\big(-nK_n(w)\big)\,\varphi(w)\,dw
=
\lambda \log n -(m-1)\log\log n +O_{\bb{P}}(1).
\]
Consequently,
\[
-\log Z_n
=
nL_n(w_0)+\lambda \log n-(m-1)\log\log n+O_{\bb{P}}(1).
\]
\end{theorem}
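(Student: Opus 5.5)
The statement is Theorem~\ref{thm:main-slt}, Watanabe's free energy asymptotics. Its second display follows immediately from the first. By Definition~\ref{def:evidence},
\[
Z_n=\exp(-nL_n(w_0))\int_W\exp(-nK_n(w))\varphi(w)\,dw,
\]
so $-\log Z_n=nL_n(w_0)+F_n^{(0)}$. The work is therefore to establish $F_n^{(0)}=\lambda\log n-(m-1)\log\log n+O_{\bb P}(1)$. I would follow Watanabe's route \cite{watanabeAlgebraicGeometryStatistical2009,Watanabe2018}: resolve singularities, put $K$ into normal crossing form, and control the fluctuation of $nK_n$ by an empirical process.

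The argument splits $W$ into $W_\varepsilon=\{K\le\varepsilon\}$ and its complement.

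On the complement, $K\ge\varepsilon$. Condition (I) gives uniform convergence of $K_n$ to $K$ in probability, so $nK_n\ge n\varepsilon/2$ there with probability tending to $1$, and that region contributes $e^{-cn}$. This is negligible against the $n^{-\lambda}(\log n)^{m-1}$ main term.

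On $W_\varepsilon$, I would invoke Hironaka's resolution theorem. Condition (II) makes $W$ semianalytic and $\varphi_2$ analytic, and condition (I) makes $K$ analytic, so there is a proper analytic map $g:\mathcal M\to W^{(\bb R)}$ with local charts in which
\[
K(g(u))=u^{2k},\qquad \varphi(g(u))|g'(u)|=b(u)\,|u^h|,\qquad b>0.
\]
Relatively finite variance then lets me write
\[
f_0(x,g(u))=u^k a(x,u),\qquad \bb E[a(X,u)]=u^k,
\]
which gives
\[
nK_n(g(u))=nu^{2k}-\sqrt n\,u^k\xi_n(u),\qquad \xi_n(u)=\frac{1}{\sqrt n}\sum_i\bigl(u^k-a(X_i,u)\bigr).
\]
By the $L^s$ analyticity of condition (I), the process $\xi_n$ converges weakly to a Gaussian process, and $\sup_u|\xi_n(u)|=O_{\bb P}(1)$.

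The local integral is handled via the state density $v(t)=\int\delta(t-u^{2k})\,b(u)|u^h|\,du$. The poles of the Mellin transform of $v$ are the poles of $\zeta$, so
\[
v(t)\sim c\,t^{\lambda-1}(-\log t)^{m-1}
\]
as $t\to0$, with $(\lambda,m)$ the largest pole and its order. The positivity assumption $\varphi(0<K\le\varepsilon)>0$ ensures this pole exists and $c>0$. Substituting $t=nu^{2k}$ gives
\[
\int\exp(-nK_n)\varphi\,dw=\frac{(\log n)^{m-1}}{n^\lambda}\,Y_n,
\]
where
\[
Y_n=\int_0^{\infty}dt\,t^{\lambda-1}e^{-t}\int_{\text{essential coords}}\exp\bigl(\sqrt t\,\xi_n(u)\bigr)\,du\,(1+o(1)).
\]
The bound on $\sup|\xi_n|$ then gives $Y_n$ and $1/Y_n$ both $O_{\bb P}(1)$, so $-\log Y_n=O_{\bb P}(1)$. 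The $O_{\bb P}(1)$ terms from the finitely many charts combine, because the integral is a finite sum of nonnegative terms, each bounded above and below by a constant times $n^{-\lambda}(\log n)^{m-1}$, or smaller.

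The main obstacle is the standard form step. One has to show that $f_0$ factors as $u^k a(x,u)$ with $a$ analytic and $L^s$-bounded, and that the lower bound on $Y_n$ holds uniformly, meaning the Gaussian fluctuation cannot drive the integral to zero. For the factorisation I would first try the division argument in Watanabe's Main Formula II, using relatively finite variance. For the lower bound, I would restrict to a fixed compact neighbourhood of the essential coordinates and use $e^{\sqrt t\,\xi}\ge e^{-\sqrt t\sup|\xi_n|}$. Given the length, in the paper I would cite \cite[Thm.~6.7]{watanabeAlgebraicGeometryStatistical2009} and \cite{Watanabe2018} for these steps rather than reprove them.
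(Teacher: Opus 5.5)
Your proposal is correct and matches the paper. The paper gives no proof of Theorem~\ref{thm:main-slt} of its own: it quotes the result from Watanabe \cite{watanabeAlgebraicGeometryStatistical2009,Watanabe2018}, and the second display is exactly your identity $-\log Z_n=nL_n(w_0)+F_n^{(0)}$, which comes from the factorisation recorded in Definition~\ref{def:evidence}. Your sketch of the first display is the standard argument of those references: localisation to $\{K\le\varepsilon\}$, resolution into normal crossing form, the factorisation $f_0(x,g(u))=u^k a(x,u)$ from relatively finite variance, and the empirical-process bound giving $-\log Y_n=O_{\bb P}(1)$; the non-realizable version with relatively finite variance, which is what the statement assumes, is the one in \cite{Watanabe2018}.
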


Fix an integer $b\ge 1$.
We identify $s\in\bb{B}^b$ with the integer $\sum_{k=1}^{b}s_k\,2^{\,b-k}\in\{0,1,\dots,2^b-1\}$ of which
it is the length-$b$ binary representation; thus $\bb{B}^b$ is a finite outcome alphabet of $2^b$
symbols, and $p(\cdot\mid w)$ is a distribution on it.
In the body of the paper the observed data is treated as a string
$z = z_1\cdots z_n \in (\bb{B}^b)^n$ (i.e.\ $n$ i.i.d.\ samples, each encoded in $b$ bits).
Whenever $M(z)$ appears, $z$ is understood as the \emph{concatenated} binary string
$z_1\cdots z_n \in \bb{B}^{nb}$.
The case $b=1$ recovers the binary setting. The development of this section is stated for
binary outcomes $x\in\bb{B}$ but uses only finiteness of the outcome alphabet, and holds
verbatim over $\bb{B}^b$; we invoke Theorem~\ref{thm:main-slt} in this generality.

\begin{definition}
\label{def:block_empirical_loss}
Fix an observed string $z=z_1\cdots z_n\in (\bb{B}^b)^n$.
Define the empirical loss
\[
L_n(w) := -\frac{1}{n}\log\Bigl(\prod_{i=1}^n p(z_i\mid w)\Bigr),
\qquad
e^{-nL_n(w)}=\prod_{i=1}^n p(z_i\mid w),
\]
and the partition function (Bayesian evidence)
\[
Z_n := \int_W e^{-nL_n(w)}\,d\phi(w).
\]
\end{definition}

For $b=1$, Definition~\ref{def:block_empirical_loss} agrees with Definitions~\ref{def:losses-KL} and~\ref{def:evidence}.

\renewcommand{\refname}{Bibliography}
\bibliographystyle{plain}
\bibliography{bibliography}

@article{ClarkeBarron1990,
  author  = {Clarke, Bertrand S. and Barron, Andrew R.},
  title   = {Information-theoretic asymptotics of {B}ayes methods},
  journal = {IEEE Transactions on Information Theory},
  volume  = {36},
  number  = {3},
  pages   = {453--471},
  year    = {1990}
}

@book{watanabeAlgebraicGeometryStatistical2009,
  title     = {Algebraic {{Geometry}} and {{Statistical Learning Theory}}},
  author    = {Watanabe, Sumio},
  year      = {2009},
  publisher = {{Cambridge University Press}},
  address   = {{USA}}
}

@misc{murfet2025programssingularities,
  title         = {Programs as Singularities},
  author        = {Daniel Murfet and Will Troiani},
  year          = {2025},
  eprint        = {2504.08075},
  archivePrefix = {arXiv},
  primaryClass  = {cs.LO},
  url           = {https://arxiv.org/abs/2504.08075},
  note          = {arXiv:2504.08075}
}

@book{Hutter2005,
  author    = {Marcus Hutter},
  title     = {Universal Artificial Intelligence: Sequential Decisions based on Algorithmic Probability},
  publisher = {Springer},
  address   = {Berlin},
  year      = {2005}
}

@Book{Hutter:24uaibook2,
  author =       "Marcus Hutter and David Quarel and Elliot Catt",
  title =        "An Introduction to Universal Artificial Intelligence",
  series =       "Chapman \& Hall/CRC Artificial Intelligence and Robotics Series",
  publisher =    "Taylor and Francis",
  _month =       may,
  year =         "2024",
  isbn =         "Paperback:9781032607023, Harcover:9781032607153, eBook:9781003460299",
  pages =        "500",
  bibtex =       "http://www.hutter1.net/official/bib.htm#uaibook2",
  doi =          "10.1201/9781003460299",
  _note =         "500+ pages, http://www.hutter1.net/ai/uaibook2.htm",
  url =          "http://www.hutter1.net/ai/uaibook2.htm",
  http =         "http://www.routledge.com/An-Introduction-to-Universal-Artificial-Intelligence/Hutter-Catt-Quarel/p/book/9781032607023",
  pdf =          "http://www.hutter1.net/publ/uaibook2.pdf",
  slides =       "http://www.hutter1.net/ai/suaibook.pdf",
  video =        "http://cartesiancafe.podbean.com/e/marcus-hutter-universal-artificial-intelligence-and-solomonoff-induction/",
  keywords =     "Artificial general intelligence; algorithmic information theory;
                  Bayes mixture distributions; universal sequence prediction;
                  context tree weighting; rational agents; sequential decision theory;
                  universal intelligent agents; reinforcement learning;
                  games and multi-agent systems; approximation/implementation/application;
                  AGI-safety; philosophy of AI.",
  abstract =     "`An Introduction to Universal Artificial Intelligence'
                  provides the formal underpinning of what it means for an agent 
                  to act intelligently in an unknown environment. 
                  First presented in `Universal Algorithmic Intelligence' (Hutter, 2000), 
                  UAI offers a framework in which virtually all AI problems can be formulated, 
                  and a theory of how to solve them. 
                  UAI unifies ideas from sequential decision theory, 
                  Bayesian inference, and algorithmic information theory to construct AIXI, 
                  an optimal reinforcement learning agent 
                  that learns to act optimally in unknown environments. 
                  AIXI is the theoretical gold standard for intelligent behavior.
                      The book covers both the theoretical and practical aspects of UAI. 
                  Bayesian updating can be done efficiently with context tree weighting, 
                  and planning can be approximated by sampling with Monte Carlo tree search. 
                  It provides algorithms for the reader to implement, 
                  and experimental results to compare against. 
                  These algorithms are used to approximate AIXI. 
                  The book ends with a philosophical discussion of Artificial General Intelligence: 
                  Can super-intelligent agents even be constructed? 
                  Is it inevitable that they will be constructed,
                  and what are the potential consequences?
                      This text is suitable for late undergraduate students. 
                  It provides an extensive chapter to fill in the required 
                  mathematics, probability, information, 
                  and computability theory background.",
  support =      "ARC grant DP150104590",
  for =          "010404(20%),080101(20%),080198(20%),080299(10%),080401(30%)",
}

@book{Watanabe2018,
  author    = {Sumio Watanabe},
  title     = {Mathematical Theory of Bayesian Statistics},
  publisher = {CRC Press},
  address   = {Boca Raton, FL},
  year      = {2018}
}

@article{hoogland2024developmental,
  title   = {Loss Landscape Degeneracy and Stagewise Development in Transformers},
  author  = {Jesse Hoogland and George Wang and Matthew Farrugia-Roberts and Liam Carroll and Susan Wei and Daniel Murfet},
  year    = {2025},
  journal = {Transactions on Machine Learning Research},
  url     = {https://arxiv.org/abs/2402.02364}
}

@article{wang2024differentiation,
  title   = {Differentiation and Specialization of Attention Heads via the Refined Local Learning Coefficient},
  author  = {George Wang and Jesse Hoogland and Stan van Wingerden and Zach Furman and Daniel Murfet},
  year    = {2024},
  journal = {arXiv preprint arXiv:2410.02984},
  url     = {https://arxiv.org/abs/2410.02984}
}

@phdthesis{lin2011algebraic,
  title  = {Algebraic methods for evaluating integrals in {B}ayesian statistics},
  author = {Lin, Shaowei},
  year   = {2011},
  school = {University of California, Berkeley}
}

@article{lau2024locallearningcoefficientsingularityaware,
  title   = {The Local Learning Coefficient: A Singularity-Aware Complexity Measure},
  author  = {Lau, Edmund and Furman, Zach and Wang, George and Murfet, Daniel and Wei, Susan},
  journal = {arXiv preprint arXiv:2308.12108},
  year    = {2024},
  url     = {https://arxiv.org/abs/2308.12108}
}

@article{wei2022deep,
  title     = {Deep {L}earning is {S}ingular, and {T}hat’s {G}ood},
  author    = {Wei, Susan and Murfet, Daniel and Gong, Mingming and Li, Hui and Gell-Redman, Jesse and Quella, Thomas},
  journal   = {IEEE Transactions on Neural Networks and Learning Systems},
  year      = {2022},
  publisher = {IEEE}
}

@article{chen2023tms1,
  title   = {Dynamical versus {B}ayesian Phase Transitions in a Toy Model of Superposition},
  author  = {Zhongtian Chen and Edmund Lau and Jake Mendel and Susan Wei and Daniel Murfet},
  year    = {2023},
  journal = {arXiv preprint arXiv:2310.06301},
  url     = {https://arxiv.org/abs/2310.06301}
}

@mastersthesis{waring2021geometric,
  title  = {Geometric perspectives on program synthesis and semantics},
  author = {Waring, Thomas},
  year   = {2021},
  school = {The University of Melbourne}
}

@misc{Urdshals2025smdl,
  author        = {Urdshals, Einar and Lau, Edmund and Hoogland, Jesse and van Wingerden, Stan and Murfet, Daniel},
  title         = {Compressibility Measures Complexity: Minimum Description Length Meets Singular Learning Theory},
  year          = {2025},
  eprint        = {2510.12077},
  archivePrefix = {arXiv},
  primaryClass  = {cs.LG},
  note          = {arXiv:2510.12077}
}

@article{HanHoshi1997,
  author  = {Han, Te Sun and Hoshi, Mamoru},
  title   = {Interval algorithm for random number generation},
  journal = {IEEE Transactions on Information Theory},
  volume  = {43},
  number  = {2},
  pages   = {599--611},
  year    = {1997}
}

@misc{HennickDellago2026,
  author = {Hennick, Max and Dellago, Matthias},
  title  = {Learning coefficients, fractals, and trees in parameter space},
  year   = {2026},
  note   = {Preprint, March 2026, \url{https://openreview.net/forum?id=KUFH0n1BIM}}
}

\end{document}